\let\MakeUppercase\relax
\newtheorem{theorem}{Theorem}
\newtheorem{lemma}{Lemma}
\newtheorem{corollary}{Corollary}
\newtheorem*{perfect fluid}{Vacuum and perfect fluid cases}
\DeclareMathOperator{\tr}{tr}
\begin{document}


\title[LRS Bianchi Type~VIII cosmologies with anisotropic matter]{Dynamics of locally rotationally symmetric \\ Bianchi type~VIII cosmologies \\ with anisotropic matter}

\author{\textsc{Gernot HEI}\ss\textsc{EL}}
\thanks{PACS: 98.80.Jk, 04.20.DW}

\address{\href{http://gravity.univie.ac.at}{Gravitational Physics Group\\
	Faculty of Physics\\
	University of Vienna\\
	Austria}}
	
\email{\href{mailto:Gernot.Heissel@Mac.com}{Gernot.Heissel@Mac.com}}
	
\date{\today}

\begin{abstract}
This paper is a study of the effects of anisotropic matter sources on the qualitative evolution of spatially homogenous cosmologies of Bianchi type~VIII. The analysis is based on a dynamical system approach and makes use of an anisotropic matter family developed by \mbox{Calogero} and \mbox{Heinzle} which generalises perfect fluids and provides a measure of deviation from isotropy. Thereby the role of perfect fluid solutions is put into a broader context.

The results of this paper concern the past and future asymptotic dynamics of locally rotationally symmetric solutions of type~VIII with anisotropic matter. It is shown that solutions whose matter source is sufficiently close to being isotropic exhibit the same qualitative dynamics as perfect fluid solutions. However a high degree of anisotropy of the matter model can cause dynamics to differ significantly from the vacuum and perfect fluid case.
\end{abstract}

\maketitle



\section{Introduction and motivation}

For spatially homogenous (SH) spacetimes the Einstein-matter equations for a large variety of matter sources reduce to an autonomous system of ordinary differential equations in time. Thus the mathematical theory of dynamical systems can be applied to gain insights into the qualitative behaviour of SH solutions. This approach has been used in mathematical cosmology, e.g. to address questions relevant for observational cosmology, in particular concerning the role the Friedmann solutions play in the more general context of SH cosmologies that are not spatially isotropic in general. On the other hand the interest in SH models is nourished by the believe that the dynamics of SH cosmologies towards the initial singularity is crucial for the understanding of the behaviour of more general spacetimes close to singularities; cf.~\cite{Heinzle Uggla Roehr (2009)} and references therein.

Less is known about SH solutions with matter sources more general than perfect fluids. \mbox{Calogero} and \mbox{Heinzle} have developed a matter family naturally generalising perfect fluids that contains large classes of anisotropic matter sources and is suited for a dynamical system analysis. It includes a measure of the deviation from isotropy and thus allows to investigate the role of perfect fluid solutions in the more general context of solutions with anisotropic matter. The SH cosmologies considered were of Bianchi type~I, and locally rotationally symmetric (LRS) types~I, II and~IX; cf.~\cite{M&S Bianchi I,M&S}. While dynamical system analyses with specific anisotropic matter sources have been carried out before, the approach by \mbox{Calogero} and \mbox{Heinzle} can be regarded as a first step towards a systematic study of the effects of anisotropic matter to the qualitative dynamics of SH cosmologies.

This paper is concerned with the analysis of SH cosmologies of LRS Bianchi type~VIII with anisotropic matter. In section~\ref{S:Setup} the basic features of the anisotropic matter family are stated, and the state space and evolution equations for LRS type~VIII are given. The dynamical system analysis is performed in section~\ref{S:ds analysis}, where some technical details on the analysis of the flow at infinity are contained in the appendix. The results are given and discussed in section~\ref{S:Results}, where the main result is formulated in theorem~\ref{T:limit sets} and corollaries~\ref{C:past asymptotics} and~\ref{C:future asymptotics}. As a small byproduct, the results cover the future asymptotics for perfect fluids with $p=w\rho$ and $w\in\left(-\frac{1}{3},0\right)$, which might fill a little gap in the literature. Section~\ref{S:Vlasov} is concerned with an extension of the formalism to treat Vlasov matter dynamics with massive particles.

Part of the material needed in sections~\ref{S:Setup}, \ref{S:ds analysis} and~\ref{S:Vlasov} has already been presented in~\cite{M&S} or~\cite{M&S;Vlasov} to analyse LRS Bianchi types~I, II and~IX. At these points in the text, only the crucial steps and results are quoted from there.

\section{The LRS Bianchi type~VIII setup}\label{S:Setup}

In a frame $(\mathrm dt,\hat\omega^1,\hat\omega^2,\hat\omega^3)$ adapted to the symmetries, an LRS Bianchi class~A metric has the form
\begin{equation}\notag
^4g=-\,\mathrm dt\otimes\mathrm dt+g_{11}(t)\,\hat\omega^1\otimes\hat\omega^1+g_{22}(t)\,(\hat\omega^2\otimes\hat\omega^2+\hat\omega^3\otimes\hat\omega^3).
\end{equation} In type~VIII, $\mathrm d\hat\omega^i=-\frac{1}{2}\epsilon_{jkl}\hat n^{ij}\hat\omega^k\wedge\hat\omega^l$, with $[\hat n^{ij}]\equiv\mathrm{diag}(-1,1,1)$.
Greek indices denote spacetime components while Latin indices label spatial components w.r.t.\ the adapted frame. 
The metric will be subject to the Einstein equations---without cosmological constant---in geometrised units ($8\pi G=c=1$), cf.~\cite[Eq 2]{M&S}.

\subsection{The anisotropic matter family}\label{SS:matter family}

For a perfect fluid that is non-tilted with respect to $\mathrm dt$, the components of the stress-energy tensor are $[{T^\mu}_\nu]=\mathrm{diag}(\rho,p,p,p)$, where $\rho$ and $p$ denote the energy density and pressure of the fluid. It is an isotropic matter model since the eigenvalues of $[{T^i}_j]$ are all equal. When $p=w\rho$ with $w=\mathrm{const}$, the fluid is said to obey a linear equation of state.\footnote{For example, $w=0$ corresponds to dust and $w=\frac{1}{3}$ to radiation.}

The matter models considered in this paper form a family of models generalising perfect fluids with linear equation of state. This family of models is described in detail in~\cite[section~3]{M&S}. In the following, a brief description tailored to the present purposes is given:

The components of the stress-energy tensor are $[{T^\mu}_\nu]=\mathrm{diag}(\rho,p_1,p_2,p_2)$ where the energy density $\rho\geq0$ and the isotropic pressure $p:=\sum_i p_i/3$ obey a linear equation of state $p=w\rho$, with $w=\mathrm{const}$. Defining the dimensionless rescaled principal pressures as $w_i:=p_i/\rho$ it is clear that $w=\sum_i w_i/3$, which in turn implies that $w_1$ and $w_2$ are not independent once $w$ is given. As a consequence of the Einstein equations and some basic assumptions on the matter family given in~\cite[section~3]{M&S}, $w_2$ (and thus $w_1$) is a function of the quantity $s:=\frac{g^{22}}{\sum_i g^{ii}}\in\left(0,\frac{1}{2}\right)$. Clearly, $s$ gives a measure of anisotropy of the spatial metric components while $w_2$ encodes the anisotropy of the matter content.\footnote{In~\cite{M&S} $w_2(s)$ is called anisotropy function and denoted by $u(s)$.} The simple example of isotropic matter corresponds to $w_1=w_2=w$. Also, note that $s\to0$ and $s\to\frac{1}{2}$ correspond to $g_{22}\to\infty$ (or $g_{11}\to0$) and $g_{11}\to\infty$ (or $g_{22}\to0$) respectively, i.e. to a singular metric. A basic assumption is that the limit of $w_1$ for $s\to\frac{1}{2}$ coincides with the limit of $w_2$ for $s\to0$, i.e. that $w_1\left(\frac{1}{2}\right)=w_2(0)$. Hence one can define an anisotropy parameter
\begin{equation}\label{E:beta}
\beta:=2\frac{w-w_2(0)}{1-w}
\end{equation} that provides a measure of deviation from an isotropic matter state in the extremal cases where the metric is singular. Note that $\beta=0$ corresponds to matter models that behave like a perfect fluid close to singularities.

The analysis in section~\ref{S:ds analysis} will show that the qualitative dynamics of LRS Bianchi type~VIII solutions does not depend on the whole function $w_2$ but merely on the value $w_2(0)$ where the metric is singular. Thus any two matter models of the anisotropic matter family that share the same parameters $w$ and $\beta$ also share the same qualitative dynamics, even if they have different functions $w_i(s)$. Accordingly, $\beta$ serves as the parameter to investigate the influence of the anisotropy of the matter on the dynamics, even though it gives a precise measure of the anisotropy only close to singularities.
Therefore, in the context of the qualitative analysis of the dynamics of solution that follows, a specific class of matter models is simply characterised by a pair $(w,\beta)$ in the parameter space
\begin{equation}\notag
\mathbb P:=\left({\textstyle-\frac{1}{3}},1\right)\times\mathbb R.
\end{equation} Here $w$ is restricted to $\left(-\frac{1}{3},1\right)$ since the primary interest is in matter models that obey the standard energy conditions~\cite[p~218--220]{Wald}: The weak energy condition corresponds to $w_i\geq-1$. The strong energy condition requires the weak energy condition to hold and $w\geq-\frac{1}{3}$. The dominant energy condition is $|w_i|\leq1$. Therefore, by~\eqref{E:beta}, for the energy conditions to hold, one has to restrict $\beta$ to $\bigl[\mathrm{max}\bigl(-2,-\frac{1+w}{1-w}\bigr),1\bigr]$; cf.~\cite[table~2 in section~3.4]{M&S}. The dominant energy condition is only marginally satisfied when $\beta(w)$ takes the boundary values. The parameter space $\mathbb P$ and the subset for which the energy conditions are satisfied is illustrated in figure~\ref{F:parameter space} together with the bifurcation lines which will be explained in section~\ref{S:ds analysis}.

\subsection{The Einstein equations as a dynamical system}\label{SS:dynamical system}
LRS Bianchi types~VIII and~IX share the same evolution equations which have been derived in detail in~\cite{M&S}. This subsection gives a brief outline:

Due to spatial homogeneity, the Einstein equations for LRS type~VIII with the above matter source are a constrained autonomous system of ordinary differential equations in $t$ for the components of the spatial metric $g_{ij}$ and the extrinsic curvature $k_{ij}$. This system is then written in terms of quantities that are standard cosmological parameters and/or bring the equations into suitable shape. These quantities are the Hubble~scalar $H:=-\frac{1}{3}({k^1}_1+2{k^2}_2)$, the shear variable $\sigma_+:=\frac{1}{3}({k^1}_1-{k^2}_2)$ and the quantity $m_1:=\sqrt{g_{11}}/g_{22}$. Finally these are divided by the dominant variable $D:=\sqrt{H^2-1/(3g_{22})}>0$ to obtain the normalised variables $(H_D,\Sigma_+,M_1):=(H,\sigma_+,m_1)/D$.\footnote{In~\cite{M&S} this normalisation has been taken in lieu of the Hubble normalisation~\cite[section~5.2]{WE} mainly because it yields a compact LRS type~IX state space. Although this is not the case for LRS type~VIII, the dominant normalisation still has favourable properties. For example the type~III form of flat spacetime is a fixed point solution in this formulation; cf. table~\ref{Table:exact solutions}.}

The resulting representation of the Einstein equations for anisotropic matter filled LRS Bianchi type~VIII spacetimes is then given by the dynamical system
\begin{equation}\label{E:dsX}
\begin{bmatrix} H_D \\Ê\Sigma_+ \\ M_1 \end{bmatrix}' =
\begin{bmatrix}
	-(1-H_D^2) (q-H_D \Sigma_+) \\
	-(2-q)H_D\Sigma_+-(1-H_D^2)(1-\Sigma_+^2)+\frac{M_1^2}{3}+3\Omega\,(w_2(s)-w) \\
	M_1\left(qH_D-4\Sigma_++(1-H_D^2)\Sigma_+\right)
\end{bmatrix}
\end{equation}
which represents the evolution equations, and the constraints
\begin{equation}\label{E:constraints}
\Omega+\Sigma_+^2+\frac{M_1^2}{12}=1 \quad \text{and} \quad 1-H_D^2=-\frac{1}{3D^2g_{22}}<0.
\end{equation}\label{E:hc}The former is the Hamiltonian constraint, while the latter follows directly from the definition of $D$. $\Omega:=\frac{\rho}{3D^2}>0$ denotes the normalised energy density and $q:=2\Sigma_+^2+\frac{1+3w}{2}\Omega$ the deceleration parameter. The prime denotes derivatives with respect to rescaled time; $(\cdot)':=\frac{1}{D}\frac{\partial}{\partial t}(\cdot)$. Note that the definition of $M_1$ and the second constraint imply that $s$ can be regarded as a function of $H_D$ and $M_1$, $s=\left(2-3\frac{1-H_D^2}{M_1^2}\right)^{-1}$. Therefore~\eqref{E:dsX} is a closed system once $w_2(s)$ is prescribed through the `equation of state' of the anisotropic matter.

\subsection{The state space $\mathcal X_\mathrm{VIII}$}\label{SS:state space}
The LRS type~VIII state space is determined by the constraints~\eqref{E:constraints}: Since $\Omega>0$ and $M_1>0$ by definition, it follows from the Hamiltonian constraint that $\Sigma_+^2<1$ and $\Sigma_+^2+\frac{M_1^2}{12}<1$. The inequality $1-H_D^2<0$ imposed by the second constraint implies that the state space is the union of two disjoint sets: $H_D>1$ corresponds to positive $H$ and hence to forever expanding universes, while $H_D<-1$ corresponds to forever contracting universes. However, since~\eqref{E:dsX} is invariant under the reflection
$(t, H_D,\Sigma_+)\rightarrow-(t, H_D,\Sigma_+)$ it suffices to restrict to the expanding case. Accordingly the LRS Bianchi type~VIII state space is defined as
\begin{equation}\notag
\mathcal X_\mathrm{VIII}:=\left\{\begin{bmatrix} H_D \\Ê\Sigma_+ \\ M_1 \end{bmatrix}\in\mathbb R^3\Bigg|H_D\in(1,\infty),\Sigma_+\in(-1,1),M_1\in(0,\sqrt{12(1-\Sigma_+^2)})\right\}\,,
\end{equation}which forms the tunnel-like structure depicted in figure~\ref{F:state space X}. The boundary subsets of $\mathcal X_\mathrm{VIII}$ are $\mathcal V_\mathrm{VIII}$, $\mathcal B_\mathrm{III}$ and $\mathcal S_\sharp$, which correspond to $\Omega=0$, $M_1=0$ and $H_D=1$, respectively. LRS type~VIII vacuum solutions thus lie in $\mathcal V_\mathrm{VIII}$ while solutions in $\mathcal S_\sharp$ and $\mathcal B_\mathrm{III}$ are of Bianchi types~II and~III respectively; cf.~\cite[sections~9.2 and the first remark in~10.1]{M&S}. $\mathcal S_\infty$, which corresponds to $H_D\rightarrow\infty$, is of course not a boundary, but one can think of it as boundary in a compactified version of the state space; cf. appendix~\ref{S:appendix}.
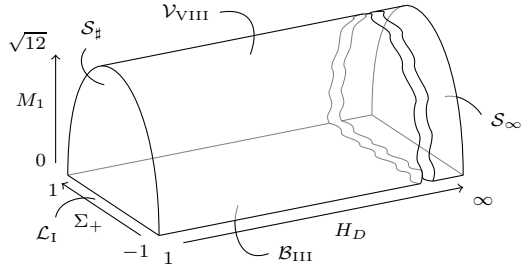
\begin{figure}\begin{tikzpicture}[scale=.8]

\node[left]at(1,3.7){$\scriptstyle{\mathcal V_\mathrm{VIII}}$};					
\draw(1,3.7)..controls+(10:.5)and+(70:.5)..(1.5,3);								%
\node[right]at(1.8,-.3){$\scriptstyle{\mathcal B_\mathrm{III}}$};				%
\draw(1.8,-.3)..controls+(200:.5)and+(-95:.2)..(1.25,.25);						%
\draw[help lines](1.25,.25)..controls+(85:.1)and+(250:.1)..(1.3,.5);			%
\node[right]at(5.3,1.9){$\scriptstyle\mathcal S_\infty$};							%
\draw(5.3,2)..controls+(120:.15)and+(-10:.15)..(4.885,2.3);					%
\draw[help lines](4.885,2.3)..controls+(170:.05)and+(20:.05)..(4.7,2.3);	%
\node[above]at(-1.1,3.1){$\scriptstyle\mathcal S_\sharp$};					%
\draw(-1.2,3.1)..controls+(-130:.3)and+(170:.3)..(-.9,2.5);						%
\node[below left]at(-1.5,.3){$\scriptstyle\mathcal L_\mathrm I$};				%
\draw(-1.7,.3)..controls+(80:.3)and+(160:.2)..(-1.05,.6);								%

\draw[help lines](5,1)--(3.5,2)--(3,1.9);					
\draw[help lines](2.8,1.86)--(-1.5,1);						%

\draw[help lines](4.05,3.812)..controls+(185:.16)and+(90:1.6)..(3.5,2);	
\draw(5,1)..controls+(90:1.85)and+(5:.395)..(4.05,3.812);						%
\draw(0,0)..controls+(90:3)and+(90:3)..(-1.5,1);									

\draw(4.5,.9)														
	..controls+(70:.15)and+(-60:.15)..(4.46,1.24)		%
	..controls+(120:.15)and+(-60:.15)..(4.47,1.74)		%
	..controls+(120:.15)and+(-110:.15)..(4.42,2.24)	%
	..controls+(70:.15)and+(-100:.15)..(4.2,2.84)		%
	..controls+(80:.2)and+(-100:.15)..(4,3.34)			%
	..controls+(80:.15)and+(-120:.15)..(3.7,3.69)		%
	..controls+(70:.05)and+(0:.05)..(3.55,3.71)			%
	..controls+(183:.05)and+(15:.03)..(3.445,3.68);	%
\draw[help lines](3.45,3.68)									%
	..controls+(183:.03)and+(170:.03)..(3.4,3.64)		%
	..controls+(-10:.05)and+(70:.1)..(3.2,3.44)			%
	..controls+(250:.1)and+(80:.1)..(3.15,3.14)			%
	..controls+(260:.1)and+(120:.1)..(3.05,2.79)		%
	..controls+(300:.1)and+(70:.1)..(3,2.44)				%
	..controls+(250:.1)and+(60:.2)..(3,2.09)				%
	..controls+(240:.1)and+(110:.1)..(3,1.9)				%
	..controls+(-15:.1)and+(85:.1)..(3.2,1.74)			%
	..controls+(265:.08)and+(80:.1)..(3.5,1.54)			%
	..controls+(260:.05)and+(75:.1)..(3.75,1.39)		%
	..controls+(255:.1)and+(85:.1)..(4.1,1.19)			%
	..controls+(265:.15)and+(80:.15)..(4.33,1.01);		%
\draw(4.33,1.01)													%
	..controls+(260:.1)and+(170:.1)..(4.5,.9);				%
\draw(4.3,.86)														
	..controls+(70:.15)and+(-60:.15)..(4.26,1.2)			%
	..controls+(120:.15)and+(-60:.15)..(4.27,1.7)		%
	..controls+(120:.15)and+(-110:.15)..(4.22,2.2)		%
	..controls+(70:.15)and+(-100:.15)..(4,2.8)			%
	..controls+(80:.2)and+(-100:.15)..(3.8,3.3)			%
	..controls+(80:.15)and+(-120:.15)..(3.5,3.65)		%
	..controls+(70:.05)and+(0:.05)..(3.35,3.67);			%
\draw[help lines](3.35,3.67)									%
	..controls+(180:.05)and+(170:.05)..(3.2,3.6)		%
	..controls+(-10:.05)and+(70:.1)..(3,3.4)				%
	..controls+(250:.1)and+(80:.1)..(2.95,3.1)			%
	..controls+(260:.1)and+(120:.1)..(2.85,2.75)		%
	..controls+(300:.1)and+(70:.1)..(2.8,2.4)				%
	..controls+(250:.1)and+(60:.2)..(2.8,2.05)			%
	..controls+(240:.1)and+(110:.1)..(2.8,1.86)			%
	..controls+(-15:.1)and+(85:.1)..(3,1.7)					%
	..controls+(265:.08)and+(80:.1)..(3.3,1.5)			%
	..controls+(260:.05)and+(75:.1)..(3.55,1.35)		%
	..controls+(255:.1)and+(85:.1)..(3.9,1.15)			%
	..controls+(265:.15)and+(80:.15)..(4.13,.97)		%
	..controls+(260:.1)and+(170:.1)..(4.3,.86);			%
	
\draw(4.5,.9)--(5,1);												
\draw(0,0)--(4.3,.86);											%
\draw(-1.5,1)--(0,0);								%

\draw(-.95,2.812)--(3.35,3.672);								
\draw(3.55,3.712)--(4.05,3.812);							%

\draw[->](.4,-.12)--(5,.8);										
\node[below left]at(.4,-.12){$\scriptstyle 1$};			%
\node[below right]at(5,.8){$\scriptstyle \infty$};			%
\node[below right]at(2.725,.35){$\scriptstyle H_D$};	%
\draw[->](-.3,0)--(-1.6,.867);									
\node[below left]at(0,0){$\scriptstyle -1$};				%
\node[below left]at(-1.5,1){$\scriptstyle 1$};				%
\node[below left]at(-.75,.5){$\scriptstyle\Sigma_+$};	%
\draw[->](-1.7,1.2)--(-1.7,3);									
\node[above left]at(-1.7,1){$\scriptstyle 0$};				%
\node[above left]at(-1.7,2.9){$\scriptstyle\sqrt{12}$};	%
\node[above left]at(-1.7,1.95){$\scriptstyle M_1$};	%

\end{tikzpicture}
\caption{The state space $\mathcal X_\mathrm{VIII}$ and its boundary subsets.}
\label{F:state space X}
\end{figure}

In this formulation LRS~type IX cosmologies are subject to the same evolution equations~\eqref{E:dsX}. However there is a sign change in the second constraint of~\eqref{E:constraints} leading to a different state space $\mathcal X_\mathrm{IX}$ for which $H_D\in(-1,1)$. Figuratively speaking $\mathcal X_\mathrm{IX}$ builds the tunnel connecting $\mathcal X_\mathrm{VIII}$ with the second disjoint LRS type~VIII set. Hence both $\mathcal X_\mathrm{VIII}$ and $\mathcal X_\mathrm{IX}$ share the boundary $\mathcal S_\sharp$. Cf.~\cite[section~9]{M&S}.

There are two challenges in connection with the analysis of the dynamical system~\eqref{E:dsX} in $\mathcal X_\mathrm{VIII}$. First, $\mathcal X_\mathrm{VIII}$ is not compact. Hence one has to perform a careful analysis of the `flow at infinity'. In the present case it will be shown that there do not exist orbits that emanate from or escape to infinity. Second, the dynamical system~\eqref{E:dsX} does not extend to the line $\mathcal L_\mathrm I$ in $\partial\mathcal X_\mathrm{VIII}$ for that $H_D=1$ and $M_1=0$ since $s(H_D,M_1)$ has no limit when $\mathcal L_\mathrm I$ is approached from $\mathcal X_\mathrm{VIII}$.\footnote{However $s(H_D,M_1)$ has limits when $\mathcal L_\mathrm I$ is approached from $\mathcal B_\mathrm{III}$ or $\mathcal S_\sharp$; cf. sections~\ref{SS:analysis on B} and~\ref{SS:analysis on S}.} The analogous problem occurs in the context of LRS type~IX, where it was overcome by introducing another set of coordinates that give regular access to this part of the boundary by `blowing up' $\mathcal L_\mathrm I$; cf.~\cite[section~10.2]{M&S}. This method can be adapted to the LRS type~VIII case by applying the same coordinate transformation. However the corresponding state space is again different:

\subsection{The state space $\mathcal Y_\mathrm{VIII}$}\label{SS:state space Y}

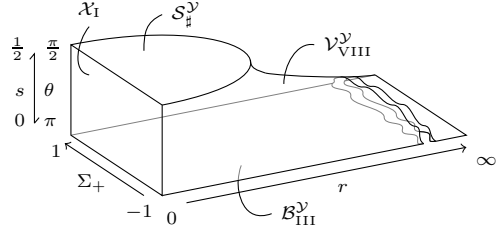
\begin{figure}\begin{tikzpicture}[scale=.8]

\node[right]at(1.8,-.3){$\scriptstyle{\mathcal B_\mathrm{III}^\mathcal Y}$};		
\draw(1.8,-.3)..controls+(200:.5)and+(-95:.2)..(1.25,.25);								%
\draw[help lines](1.25,.25)..controls+(85:.1)and+(250:.1)..(1.3,.5);					%
\node[right]at(0,3){$\scriptstyle\mathcal S_\sharp^\mathcal Y$};						%
\draw(0,3)..controls+(205:.3)and+(90:.3)..(-.3,2.4);										%
\node[above]at(-1.2,2.8){$\scriptstyle\mathcal X_\mathrm I$};						%
\draw(-1.2,2.8)..controls+(-130:.3)and+(170:.3)..(-1.2,2);								%
\node[right]at(2.5,2.5){$\scriptstyle{\mathcal V_\mathrm{VIII}^\mathcal Y}$};	%
\draw(2.5,2.5)..controls+(170:.3)and+(90:.3)..(2,1.85);									%

\draw[help lines](3.5,2)--(3,1.9);													
\draw[help lines](2.8,1.86)--(-1.5,1);											%


\draw[help lines](3.13,1.965)..controls+(190:.2)and+(25:.1)			
	..(3,1.9)..controls+(-120:.15)and+(80:.15)								%
	..(3.3,1.69)..controls+(260:.15)and+(90:.15)							%
	..(3.85,1.34)..controls+(-90:.15)and+(75:.15)							%
	..(4.15,1.14)..controls+(255:.15)and+(80:.15)							%
	..(4.4,.97)..controls+(269:.1)and+(180:.1)..(4.5,.9);					%
\draw(4.23,1.06)..controls+(-5:.09)and+(80:.09)..(4.4,.97)				%
	..controls+(269:.1)and+(180:.1)..(4.5,.9);									%
\draw(4.5,.9)..controls+(150:.15)and+(-5:.15)								%
	..(4.3,1.16)..controls+(175:.15)and+(0:.15)								%
	..(3.95,1.51)..controls+(180:.2)and+(5:.2)								%
	..(3.45,1.81)..controls+(185:.15)and+(10:.15)..(3.13,1.969);		%
\draw[help lines](2.93,1.965)..controls+(190:.2)and+(25:.1)			
	..(2.8,1.86)..controls+(-120:.15)and+(80:.15)							%
	..(3.1,1.65)..controls+(260:.15)and+(90:.15)							%
	..(3.65,1.3)..controls+(-90:.15)and+(75:.15)								%
	..(3.95,1.1)..controls+(255:.15)and+(80:.15)							%
	..(4.2,.93)..controls+(269:.1)and+(180:.1)..(4.3,.86);					%
\draw(4.3,.86)..controls+(150:.15)and+(-5:.15)								%
	..(4.15,1.15)..controls+(175:.15)and+(0:.15)							%
	..(3.8,1.5)..controls+(180:.2)and+(5:.2)									%
	..(3.3,1.8)..controls+(185:.15)and+(10:.15)..(2.93,1.965);			%

\draw(4.5,.9)--(5,1)--(3.5,2);														
\draw(0,0)--(4.3,.86);																%
\draw(-1.5,1)--(0,0)--(0,1.5)--(-1.5,2.5)--cycle;								
\draw(0,1.5)..controls+(11.31:2.8)and+(11.31:2.8)..(-1.5,2.5);			

\draw(1.7,2.03)..controls+(-10:.3)and+(190:.4)..(3.5,2);					
\draw(1.45,2.25)..controls+(-90:.15)and+(160:.15)..(1.7,2.03);		%

\draw[->](.4,-.12)--(5,.8);												
\node[below left]at(.4,-.12){$\scriptstyle0$};						%
\node[below right]at(5,.8){$\scriptstyle\infty$};					%
\node[below right]at(2.725,.35){$\scriptstyle r$};				%
\draw[->](-.3,0)--(-1.6,.867);											
\node[below left]at(0,0){$\scriptstyle -1$};						%
\node[below left]at(-1.5,1){$\scriptstyle 1$};						%
\node[below left]at(-.75,.5){$\scriptstyle\Sigma_+$};			%
\draw(-2.1,1.2)--(-2.1,2.35)--+(230:.12);							
\draw(-2.1,1.2)--+(50:.12);											%
\node[above left]at(-2.1,1){$\scriptstyle0$};						%
\node[above left]at(-2.1,2.1){$\scriptstyle\frac{1}{2}$};		%
\node[above left]at(-2.1,1.5){$\scriptstyle s$};					%
\node[above right]at(-2.1,1){$\scriptstyle\pi$};					%
\node[above right]at(-2.1,2.1){$\scriptstyle\frac{\pi}{2}$};	%
\node[above right]at(-2.1,1.5){$\scriptstyle\theta$};			%

\end{tikzpicture}
\caption{The state space $\mathcal Y_\mathrm{VIII}$ and its boundary subsets.}
\label{F:state space Y}
\end{figure}

The coordinate transformation used to analyse $\mathcal L_\mathrm I$ is
\begin{equation}\label{E:coord trafo}
1-H_D^2=2r\cos\theta,\quad M_1^2=3r\sin\theta,\quad\Sigma_+\text{ unchanged},
\end{equation}
where $r\ge0$ and $\theta\in[\frac{\pi}{2},\pi]$. Thereby, $\mathcal X_\mathrm{VIII}$ is transformed to the state space
\begin{equation}\notag
\mathcal Y_\mathrm{VIII}:=\left\{\begin{bmatrix} r \\\theta \\ \Sigma_+ \end{bmatrix}\in\mathbb R^3\Bigg|r\in\left(0,\frac{4(1-\Sigma_+^2)}{\sin\theta}\right),\theta\in\left(\frac{\pi}{2},\pi\right),\Sigma_+\in(-1,1)\right\}
\end{equation} which is depicted in figure~\ref{F:state space Y}. From~\eqref{E:coord trafo} one has the following correspondences between subsets of $\overline{\mathcal X}_\mathrm{VIII}$ and $\overline{\mathcal Y}_\mathrm{VIII}$:
\begin{equation}\label{E:correspondences}
\mathcal X_\mathrm{VIII}\sim\mathcal Y_\mathrm{VIII},\quad\overline{\mathcal V}_\mathrm{VIII}\sim\overline{\mathcal V}_\mathrm{VIII}^\mathcal Y,\quad\overline{\mathcal B}_\mathrm{III}\sim\overline{\mathcal B}_\mathrm{III}^\mathcal Y\cup\overline{\mathcal X}_I,\quad\overline{\mathcal S}_\sharp\sim\overline{\mathcal S}_\sharp^\mathcal Y\cup\overline{\mathcal X}_I,\quad\overline{\mathcal L}_\mathrm I\sim\overline{\mathcal X}_\mathrm I.
\end{equation}$\mathcal S_\infty$ corresponds to a line at infinity in the context of $\mathcal Y_\mathrm{VIII}$. Furthermore, \eqref{E:coord trafo} defines a diffeomorphism between $\overline{\mathcal X}_\mathrm{VIII}\backslash\overline{\mathcal L}_\mathrm I$ and $\overline{\mathcal Y}_\mathrm{VIII}\backslash\overline{\mathcal X}_\mathrm I$, from which it follows that the flows in these sets are topologically equivalent. In particular,
\begin{equation}\label{E:isomorphisms}
\mathcal X_\mathrm{VIII}\cong\mathcal Y_\mathrm{VIII},\quad\mathcal V_\mathrm{VIII}\cong\mathcal V_\mathrm{VIII}^\mathcal Y,\quad\mathcal B_\mathrm{III}\cong\mathcal B_\mathrm{III}^\mathcal Y\quad\text{and}\quad\mathcal S_\sharp\cong\mathcal S_\sharp^\mathcal Y.
\end{equation}
In contrast, the coordinate transformation performs a blowup of the line $\mathcal L_\mathrm I$ to the two-dimensional set $\mathcal X_\mathrm I$ defined by setting $r=0$. It can be identified with the LRS Bianchi type~I state space; cf. \cite[section~10.2]{M&S}. With~\eqref{E:coord trafo}, $s$ can be regarded as a function of $\theta$ alone. Therefore $s$ has a limit as $r\to0$, which implies that the dynamical system, when expressed in the coordinates $(r,\theta,\Sigma_+)$, extends regularly to $\mathcal X_\mathrm I$.

Finally, note that since $s(\theta)$ is a bijection on $[\frac{\pi}{2},\pi]$ one can as well choose $s$ as coordinate instead of $\theta$; hence the two labels on the axis in figure~\ref{F:state space Y}.


\section{The dynamical system analysis}\label{S:ds analysis}

Whenever possible the analysis is carried out in the coordinates $(H_D,\Sigma_+,M_1)$, which is in all sets except $\mathcal X_\mathrm I$. However the final results presented in section~\ref{S:Results} have to be interpreted in the context of the state space $\mathcal Y_\mathrm{VIII}$ since the system~\eqref{E:dsX} is not regular in $\overline{\mathcal X}_\mathrm{VIII}$, which prevents a complete global analysis in the original state space.

Since the matter parameters $w$ and $\beta$ enter the evolution equations everywhere except in the vacuum boundary and at infinity, cf. sections~\ref{SS:analysis on V} and~\ref{SS:Analysis on X}, their values determine the qualitative properties of the flow. For instance there are fixed points that only occur in the state space iff $(w,\beta)$ is in a certain subset of $\mathbb P$. Similarly, fixed points may have different local stability properties depending on $(w,\beta)$. The curves $\beta(w)\in\mathbb P$ dividing $\mathbb P$ into these subsets corresponding to qualitatively different dynamics shall be called bifurcation lines. These bifurcation lines will be found in the subsequent subsections and are plotted in figure~\ref{F:parameter space}.

\begin{figure}
\begin{tikzpicture}[xscale=5,yscale=3]
\path(7,0)coordinate(0;0);	\path(7,-.333)coordinate(0;-1);								
\path(6.666,-.666)coordinate(-1/3;-2);\path(8,-.666)coordinate(1;-2);				%
\path(6.666,-.333)coordinate(-1/3;-1);\path(8,-.333)coordinate(1;-1);				%
\path(6.666,0)coordinate(-1/3;0);\path(8,0)coordinate(1;0);							%
\path(6.666,.333)coordinate(-1/3;1);\path(8,.333)coordinate(1;1);					%
\path(6.666,.666)coordinate(-1/3;2);\path(8,.666)coordinate(1;2);					%
\path(6.775,-.089)coordinate(upeak);\path(6.666,-.167)coordinate(-1/3;-1/2);	%

\draw[draw=black!00,fill=black!8](-1/3;-1/2)								
	..controls+(-15:.1)and+(145:.1)..(0;-1)									%
	..controls+(-35:.1)and+(120:.1)..(7.333,-.666)						%
	--(1;-2)--(1;1)--(-1/3;1);														%
\draw[draw=black!00,fill=black!20](-1/3;-1/2)							
	..controls+(10:.05)and+(-90:.03)..(upeak)							%
	..controls+(90:.03)and+(-10:.05)..(-1/3;0)--(-1/3;-1/2);			%
\node[below left]at(6.5,-.1){$\scriptstyle\mathbb P_\supset$};		%
\draw(6.48,-.15)..controls+(30:.1)and+(140:.1)..(6.7,-.1);			%

\draw(-1/3;-1/2)																	
	..controls+(-15:.1)and+(145:.1)..(0;-1)									%
	..controls+(-35:.1)and+(120:.1)..(7.333,-.666);						%
\node[right]at(7.35,-.5){$\scriptstyle-\frac{1+w}{1-w}$};				%
\draw(7.35,-.5)..controls+(160:.05)and+(20:.05)..(7.24,-.51);		%

\draw[->](6.5,0)--(8.333,0);\node[below]at(8.333,0){$\scriptstyle w$};					
\draw[->](7,-1.2)--(7,1.2);\node[right]at(7,1.2){$\scriptstyle\beta$};						

\draw(-1/3;-2)--(1;-2);\node[left]at(-1/3;-2){$\scriptstyle-2$};	
\draw(-1/3;-1)--(1;-1);\node[left]at(-1/3;-1){$\scriptstyle-1$};	%
\draw(-1/3;1)--(1;1);\node[left]at(-1/3;1){$\scriptstyle1$};		%
\draw(-1/3;2)--(1;2);\node[left]at(-1/3;2){$\scriptstyle2$};		%
\node[left]at(-1/3;-1/2){$\scriptstyle-\frac{1}{2}$};					%

\draw(6.666,-1)--(6.666,1);\node[below]at(6.666,-1){$\scriptstyle-\frac{1}{3}$};		
\node[below]at(7.333,-1){$\scriptstyle\frac{1}{3}$};											%
\draw(8,-1)--(8,1);\node[below]at(8,-1){$\scriptstyle1$};										%

\draw(-1/3;0)..controls+(20:.4)and+(185:.4)..(1;1);				
\draw(-1/3;-1/2)..controls+(10:.05)and+(-90:.03)..(upeak)		%
	..controls+(90:.03)and+(-10:.05)..(-1/3;0);						%
\draw(-1/3;-1/2)..controls+(10:.1)and+(220:.2)..(0;0)				%
	..controls+(40:.4)and+(250:.4)..(7.666,1);						%
\node[above]at(7.77,1.02){$\scriptstyle\beta_\flat(w)$};			%
\draw(7.77,1.02)..controls+(-80:.06)and+(-20:.06)..(7.65,.9);	%
\node[right]at(7.64,.45){$\scriptstyle\beta_\sharp(w)$};			%
\draw(7.64,.45)..controls+(-170:.05)and+(90:.07)..(7.5,.28);	%
\node[right]at(7.07,-.167){$\scriptstyle\beta_\pm(w)$};			%
\draw(7.07,-.167)..controls+(180:.1)and+(10:.1)..(6.8,-.07);	%

\draw[dashed](6.666,-.044)..controls+(20:.4)and+(185:.4)..(1;1);	
\draw[dashed](6.666,.081)..controls+(-10:.1)and+(195:.1)..(6.9,.1)	%
	..controls+(15:.3)and+(248:.4)..(7.6,1);									%
\node[above]at(7.42,1.02){$\scriptstyle\beta_\flat^*(w)$};				%
\draw(7.42,1.02)..controls+(-90:.05)and+(170:.05)..(7.52,.9);			%
\node[right]at(7.64,.167){$\scriptstyle\beta_\sharp^*(w)$};				%
\draw(7.64,.167)..controls+(-120:.05)and+(-80:.05)..(7.5,.2);			%

\end{tikzpicture}
\caption{The bifurcation diagram in the parameter space $\mathbb P$. The shaded region (including $\mathbb P_\supset$) marks the subset for which the energy conditions are satisfied.}
\label{F:parameter space}
\end{figure}
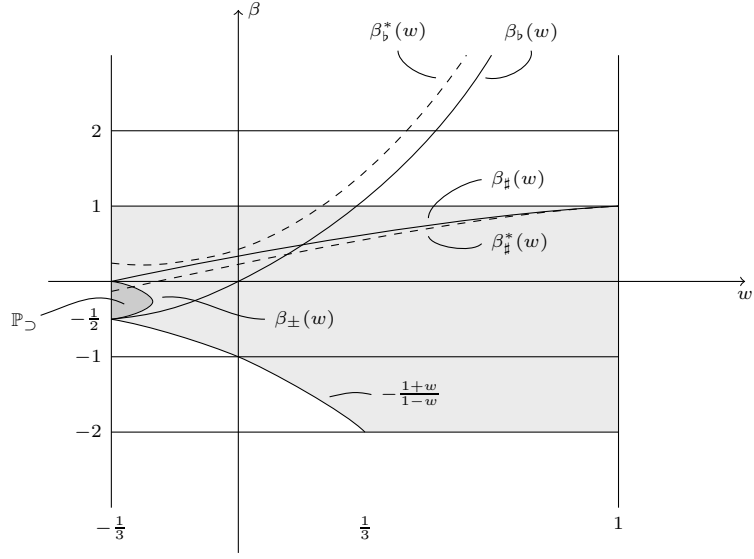

\subsection{Analysis in $\overline{\mathcal V}_\mathrm{VIII}$}\label{SS:analysis on V}

The dynamical system in $\overline{\mathcal V}_\mathrm{VIII}$ is obtained by setting $\Omega=0$ in~\eqref{E:dsX} and using~\eqref{E:constraints}:
\begin{equation}\label{E:dsV}
\begin{bmatrix} H_D \\ \Sigma_+ \end{bmatrix}' =
\begin{bmatrix} (1-H_D^2)(H_D-2\Sigma_+)\Sigma_+ \\ (1-\Sigma_+^2)\left(2+(1-\Sigma_+^2)+(H_D-\Sigma_+)^2\right) \end{bmatrix}
\end{equation}

There are three fixed points in $\overline{\mathcal V}_\mathrm{VIII}$, $T:=[1,-1]^\mathrm T$, $Q:=[1,1]^\mathrm T$ and $D:=[2,1]^\mathrm T$. The eigenvectors and eigenvalues of the linearisation of~\eqref{E:dsV} at the fixed points determine their local stability properties. One finds\footnote{Here and henceforth the notation follows the pattern $P:\begin{bmatrix}\lambda_1\\\lambda_2\end{bmatrix}\begin{bmatrix}v_1, v_2 \end{bmatrix}$ where $\lambda_i$ and $v_i$ denote the $i^\mathrm{th}$ eigenvalue and eigenvector of the linearisation of the dynamical system at $P$.}
\begin{equation}\notag
T: \begin{bmatrix} 6 \\ 12 \end{bmatrix} \begin{bmatrix} 1 & 0 \\ 0 & 1 \end{bmatrix}, \quad
Q: \begin{bmatrix} 2 \\ -4 \end{bmatrix} \begin{bmatrix} 1 & 0 \\ 0 & 1 \end{bmatrix} \quad \text{and} \quad
D: \begin{bmatrix} -3 \\ -6 \end{bmatrix} \begin{bmatrix} 1 & -2 \\ 0 & 1 \end{bmatrix},
\end{equation}
so $T$ is a source, $Q$ is a saddle repelling in $H_D$ direction and $D$ is a sink in $\overline{\mathcal V}_\mathrm{VIII}$.

It is proven in appendix~\ref{SS:appendix2dim} that there are exactly two more fixed points at infinity, $T_\infty=[\infty,-1]^\mathrm T$ and $Q_\infty=[\infty,1]^\mathrm T$. The stability properties then follow directly from~\eqref{E:dsV}: First, $\Sigma_+'>0$ in $\mathcal V_\mathrm{VIII}$. Second, for $H_D>2$ (and thus at infinity), $H_D' \gtreqqless 0$ for $\Sigma_+\lesseqqgtr0$. Hence $T_\infty$ and $Q_\infty$ play the role of saddles for the flow in $\overline{\mathcal V}_\mathrm{VIII}$ given in figure~\ref{F:flow on V}.

To interpret this as flow in $\overline{\mathcal V}_\mathrm{VIII}^\mathcal Y$, note that the fixed point $T$ in $\overline{\mathcal V}_\mathrm{VIII}$ corresponds to the closure of the line $T_\flat\to T_\sharp$ in $\overline{\mathcal V}_\mathrm{VIII}^\mathcal Y$. Since $T_\flat$ and $T_\sharp$ turn out to act as a source and a saddle in $\overline{\mathcal V}_\mathrm{VIII}^\mathcal Y$ in all cases respectively, cf. figures~\ref{F:flow13} to~\ref{F:flow1}, the orbits in $\mathcal V_\mathrm{VIII}^\mathcal Y$ have the form $T_\flat\to D$.

\begin{figure}\begin{tikzpicture}
\path (1,-1) coordinate (T);\node [below left] at (T) {$\scriptstyle T$};
\path (1,1) coordinate (Q);\node [above left] at (Q) {$\scriptstyle Q$};
\path (6,-1) coordinate (Tinfty);\node [below right] at (Tinfty) {$\scriptstyle{T_\infty}$};
\path (6,1) coordinate (Qinfty);\node [above right] at (Qinfty) {$\scriptstyle{Q_\infty}$};
\path (2,1) coordinate (S);\node [above] at (S) {$\scriptstyle D$};
\path (5.7,-1) coordinate (Tinterrupt1);\path (5.8,-1) coordinate (Tinterrupt2);
\path (5.7,1) coordinate (Qinterrupt1);\path (5.8,1) coordinate (Qinterrupt2);
\draw [->] (T)--(3.5,-1);\draw(3.5,-1)--(Tinterrupt1);\draw(Tinterrupt2)--(Tinfty); 
\draw [->] (Tinfty)--(6,0);\draw (6,0)--(Qinfty); 												
\draw (Qinfty)--(Qinterrupt2);\draw [->] (Qinterrupt1)--(4,1);\draw(4,1)--(S); 		
\draw [->] (T)--(1,0);\draw (1,0)--(Q);\draw [->] (Q)--(1.5,1);\draw (1.5,1)--(S); 	

\draw[->](.5,-.8)--(.5,.8);\node[left]at(.5,-.8){$\scriptstyle{-1}$};\node[left]at(.5,.8){$\scriptstyle1$};\node[left]at(.5,0){$\scriptstyle{\Sigma_+}$};
\draw[->](1.2,-1.5)--(5.8,-1.5);\node[below]at(1.2,-1.5){$\scriptstyle1$};\node[below]at(5.8,-1.5){$\scriptstyle \infty$};\node[below]at(3.5,-1.5){$\scriptstyle{H_D}$};

\foreach \point in {Tinterrupt1,Tinterrupt2,Qinterrupt1,Qinterrupt2}{\draw (\point)--++(-.05,-.1);\draw (\point)--++(.05,.1);}

\path (5.5,0) coordinate (tp1); \path (4,0) coordinate (tp2); \path (2.5,0) coordinate (tp3a); \path (1.5,.75) coordinate (tp3b);
\draw [->] (T)..controls+(15:1cm) and +(-90:1cm)..(tp1); \draw (tp1)..controls+(90:1cm) and +(-30:.5cm)..(S);
\draw [->] (T)..controls +(45:.5cm) and +(-90:.7cm)..(tp2); \draw (tp2)..controls +(90:.7cm) and +(-90:.5cm)..(S);
\draw [->] (T)..controls +(60:.5cm) and +(-90:.5cm)..(tp3a); \draw (tp3a)..controls +(90:.5cm) and +(-90:.2cm)..(tp3b);
\draw (tp3b)..controls +(90:.2cm) and +(-120:.1cm)..(S);

\end{tikzpicture}
\caption{The flow in $\overline{\mathcal V}_\mathrm{VIII}$.}\label{F:flow on V}
\end{figure}

\subsection{Analysis in $\overline{\mathcal B}_\mathrm{III}$}\label{SS:analysis on B}

The dynamical system in $\overline{\mathcal B}_\mathrm{III}$ is obtained from~\eqref{E:dsX} by setting $M_1=0$ $(\Leftrightarrow s=0)$ and using~\eqref{E:beta}:
\begin{equation}\label{E:dsB}
\begin{bmatrix} H_D \\ \Sigma_+ \end{bmatrix}' =
\begin{bmatrix} -(1-H_D^2)\left(2-\frac{3}{2}(1-w)(1-\Sigma_+^2)-H_D\Sigma_+\right) \\ -(1-\Sigma_+^2)\left((1-H_D^2)+\frac{3}{2}(1-w)(H_D\Sigma_++\beta)\right) \end{bmatrix}
=: f(H_D,\Sigma_+)
\end{equation}
This system has at least three and at most five fixed points in $\overline{\mathcal B}_\mathrm{III}$ depending on $(w,\beta)\in\mathbb P$, namely
\begin{equation}\notag
T_\flat:=\begin{bmatrix} 1 \\ -1 \end{bmatrix},
Q_\flat:=\begin{bmatrix} 1 \\ 1 \end{bmatrix},
D=\begin{bmatrix} 2 \\Ê1 \end{bmatrix},
R_\flat:=\begin{bmatrix} 1 \\ -\beta \end{bmatrix} \text{and }
P:=\begin{bmatrix} \frac{2+3\beta(1-w)}{\sqrt{(1-3w)^2+6\beta(1-w)}} \\ \frac{1+3w}{\sqrt{(1-3w)^2+6\beta(1-w)}} \end{bmatrix}.
\end{equation}
Clearly $R_\flat$ is a fixed point in $\overline{\mathcal B}_\mathrm{III}$ (and different from $T_\flat$, $Q_\flat$) iff $\beta\in(-1,1)$. The conditions $H_D|_P>1$ and $\Sigma_+|_P<1$ entail that $P$ is in $\mathcal B_\mathrm{III}$ iff $\beta > \beta_\flat(w) := \frac{2w}{1-w}$ and $(w,\beta)\not\in\overline{\mathbb P}_\supset$, where $\mathbb P_\supset$ refers to the small $\supset$-shaped subset of $\mathbb P$ bounded by $\beta_\pm(w):=\frac{-1\pm\sqrt{-3+(1-3w)^2}}{3(1-w)}$; cf. figure~\ref{F:parameter space}. Under these conditions the square root in the coordinates of $P$ is automatically real and $\Sigma_+|_P>0$.

The eigenvectors and eigenvalues of the linearisation $\mathrm Df(H_D,\Sigma_+)$ at $T_\flat, Q_\flat, D$ and $R_\flat$ determine their local stability properties. One finds
\begin{align*}\notag
T_\flat: \begin{bmatrix} 6 \\ \scriptstyle{3(1-w)(1-\beta)} \end{bmatrix} &\begin{bmatrix} 1 & 0 \\ 0 & 1 \end{bmatrix}, \quad
&&D\,:\begin{bmatrix} -3 \\ 3\beta(1-w)-6w \end{bmatrix}~\begin{bmatrix} 1 & \frac{1-3w}{1-2w+\beta(1-w)} \\ 0 & 1 \end{bmatrix}, \\
Q_\flat: \begin{bmatrix} 2 \\ \scriptstyle{3(1-w)(1+\beta)} \end{bmatrix} &\begin{bmatrix} 1 & 0 \\ 0 & 1 \end{bmatrix}, \quad
&&R_\flat:\begin{bmatrix} \scriptstyle{3\beta^2(1-w)+1+3w+2\beta} \\ -\frac{3}{2}(1-w)(1-\beta^2) \end{bmatrix} \begin{bmatrix} \frac{3\beta^2(1-w)+5+3w+4\beta}{(1-\beta^2)(3\beta(1-w)+4)} & 0 \\ 1 & 1 \end{bmatrix}.
\end{align*}
Therefore, in $\overline{\mathcal B}_\mathrm{III}$,  $T_\flat$ is a source for $\beta<1$ and a saddle repelling in $H_D$ direction for $\beta>1$, $Q_\flat$ is a saddle repelling in $H_D$ direction for $\beta<-1$ and a source for $\beta>-1$, $D$ is a sink for $\beta<\beta_\flat(w)$ and a saddle attracting in $H_D$ direction for $\beta>\beta_\flat(w)$ and $R_\flat$ is a sink $\forall (w,\beta)\in\mathbb P_\supset$ and a saddle attracting in $\Sigma_+$ direction $\forall (w,-1<\beta<1)\in\mathbb P\backslash\overline{\mathbb P}_\supset$.

The eigenvalues $\lambda_\pm$ of $\mathrm Df|_P$ are given in terms of trace and determinant by
\begin{equation}\label{E:eigenvalues(P)}
2\lambda_\pm = \tr\mathrm Df|_P \pm\sqrt{(\tr\mathrm Df|_P)^2-4\det\mathrm Df|_P},
\end{equation}
where
\begin{align}
\det\mathrm Df|_P &= \frac{9(1-w)\big(3\beta^2(1-w)+1+3w+2\beta\big)\big(\beta(1-w)-2w\big)}{(1-3w)^2+6\beta(1-w)}, \label{E:det(Dfb)} \\
\tr\mathrm Df|_P &= -\frac{3(1-w)(1+2\beta)}{\sqrt{(1-3w)^2+6\beta(1-w)}}, \label{E:tr(Dfb)}
\end{align}
which are real whenever $P$ exists in $\mathcal B_\mathrm{III}$ because the square root in~\eqref{E:tr(Dfb)} appears in the coordinates of $P$ as well. Further, $\tr\mathrm Df|_P<0$ and $\det\mathrm Df|_P>0$ whenever $P$ is in $\mathcal B_\mathrm{III}$.\footnote{To see the first inequality, note from figure~\ref{F:parameter space} that $\beta>-\frac{1}{2}$ when $P$ is in $\mathcal B_\mathrm{III}$. To see the second inequality, note that setting the middle and last factor in the numerator of~\eqref{E:det(Dfb)} to zero corresponds to the bifurcation line $\beta_\pm(w)$ and $\beta_\flat(w)$, respectively.} Hence the real part of~\eqref{E:eigenvalues(P)} is always negative, whether the eigenvalues are real or complex, so $P$ is a sink in $\mathcal B_\mathrm{III}$ whenever present. The eigenvalues are real for $\beta\in(\beta_\flat(w),\beta_\flat^*(w)]$; cf. figure~\ref{F:parameter space}.

In order to perform the analysis of the flow in the full state space $\mathcal X_\mathrm{VIII}$ in section~\ref{SS:Analysis on X}, it is also necessary to know the local stability of $P$ in the direction of $M_1$: From~\eqref{E:dsX} one has $(\ln M_1)'|_P=-3\Sigma_+|_P$, which is negative so that $P$ is attracting in the direction of $M_1$. This implies that $P$ is a local sink in $\overline{\mathcal X}_\mathrm{VIII}$.

It is shown in appendix~\ref{SS:appendix2dim} that there are exactly two more fixed points at infinity, $T_\infty=[\infty,-1]^\mathrm T$ and $Q_\infty=[\infty,1]^\mathrm T$. Their stability properties then follow straightforwardly from~\eqref{E:dsB}: For $H_D$ sufficiently large, $f(H_D,\Sigma_+)\approx[-H_D^3\Sigma_+,\allowbreak H_D^2(1-\Sigma_+^2)]^\mathrm T$. Hence, $\Sigma_+'>0$ and $H_D' \gtreqqless 0$ for $\Sigma_+ \lesseqqgtr 0$, which implies that $T_\infty$ and $Q_\infty$ play the role of saddles for the flow in $\overline{\mathcal B}_\mathrm{III}$.

For the cases where $P$ does not exist in $\mathcal B_\mathrm{III}$, the information suffices to draw the corresponding qualitative flow diagrams. When $P$ exists in $\mathcal B_\mathrm{III}$, periodic orbits encircling this fixed point could in principle be present. However numeric investigations strongly suggest that this is not the case. In any case, the main results of this paper are not affected by this open question; cf. section~\ref{S:Results}.

The resulting qualitative dynamics in $\overline{\mathcal B}_\mathrm{III}$ in dependence of $(w,\beta)\in\mathbb P$ is depicted in figures~\ref{F:flow13} to~\ref{F:flow1} as flows in $\overline{\mathcal B}_\mathrm{III}^\mathcal Y$ together with the flows in $\overline{\mathcal S}_\sharp^\mathcal Y$ and $\overline{\mathcal X}_\mathrm I$. The latter two subsets will be discussed next.

\subsection{Analysis in $\overline{\mathcal S}_\sharp$}\label{SS:analysis on S}
$\overline{\mathcal S}_\sharp$ is the common boundary of the state spaces $\mathcal X_\mathrm{VIII}$ and $\mathcal X_\mathrm{IX}$ of LRS types~VIII and~IX. It has been analysed in detail in~\cite[section 10.1]{M&S}. The bifurcation lines $\beta=\pm2$, $\beta_\sharp(w)$ and $\beta_\sharp^*(w)$ result from this analysis. The results are depicted in section~\ref{S:Results} as flows in $\overline{\mathcal S}_\sharp^\mathcal Y$ together with the flows in $\overline{\mathcal B}_\mathrm{III}^\mathcal Y$ and $\overline{\mathcal X}_\mathrm I$.

\subsection{Analysis in $\overline{\mathcal X}_\mathrm I$}\label{SS:analysis on XI}

The flow in $\overline{\mathcal X}_\mathrm I$ has been analysed in detail in~\cite[section 10.2]{M&S}. The only new bifurcation line in figure~\ref{F:parameter space} resulting from this analysis is $\beta=0$ which corresponds to the local stability properties of the Friedmann point $F$. However the qualitative dynamics in $\overline{\mathcal X}_\mathrm I$ is also associated with the bifurcation lines $\beta=\pm1$ and $\beta=\pm2$. The results are depicted in section~\ref{S:Results} together with the flows in $\overline{\mathcal B}_\mathrm{III}^\mathcal Y$ and $\overline{\mathcal S}_\sharp^\mathcal Y$.

\subsection{Analysis in $\overline{\mathcal X}_\mathrm{VIII}$ and $\overline{\mathcal S}_\infty$}\label{SS:Analysis on X}

The full system~\eqref{E:dsX} does not have any fixed points in $\mathcal X_\mathrm{VIII}$. Furthermore, as shown in appendix~\ref{SS:appendix3dim} there are only the already know fixed points $T_\infty=[\infty,-1,0]^\mathrm T$ and $Q_\infty=[\infty,1,0]^\mathrm T$ at infinity. Their stability properties then follow straightforwardly from~\eqref{E:dsX}: For $H_D$ sufficiently large, $\mathrm{rhs}\eqref{E:dsX}\approx[-H_D^3\Sigma_+,H_D^2(1-\Sigma_+^2),-H_D^2\Sigma_+M_1]^\mathrm T$. Hence $\Sigma_+'>0$, $H_D'\gtreqqless0$ for $\Sigma_+\lesseqqgtr0$ and $M_1'\gtreqqless0$ for $\Sigma_+\lesseqqgtr0$, which implies that $T_\infty$ and $Q_\infty$ play the role of saddles for the flow in $\overline{\mathcal X}_\mathrm{VIII}$. Viewing infinity as the boundary $\overline{\mathcal S}_\infty$ in a compactified version of the state space, $T_\infty$ is a source and $Q_\infty$ is a sink in $\overline{\mathcal S}_\infty$, hence all orbits in $\overline{\mathcal S}_\infty$ are of the form $T_\infty\to Q_\infty$.

An important consequence is that there is no orbit in $\mathcal X_\mathrm{VIII}$ that emanates from or escapes to infinity. Hence each orbit in $\mathcal X_\mathrm{VIII}$ lies in a compact subset. The following lemma concludes this section. It will also be used to localise the $\alpha$- and $\omega$-limit sets~\cite[p~99 Def~4.12]{WE} in the next section; cf.~\cite[p~91 Def~4.7]{WE} for the term `invariant set':

\begin{lemma}\label{L:monotonicity principle} Let $\gamma$ be an orbit in $\mathcal Y_\mathrm{VIII}$. Then both, $\alpha(\gamma)$ and $\omega(\gamma)$, is a non-empty, compact and connected invariant subset of $\overline{\mathcal Y}_\mathrm{VIII}$. Furthermore, $\alpha(\gamma)\subseteq\overline{\mathcal S}_\sharp^\mathcal Y\cup\overline{\mathcal X}_\mathrm I$ and $\omega(\gamma)\subseteq\overline{\mathcal B}_\mathrm{III}^\mathcal Y\cup\overline{\mathcal X}_\mathrm I$.
\end{lemma}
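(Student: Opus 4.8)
The plan is to separate the statement into its two halves. The general properties---non-emptiness, compactness, connectedness and invariance---should follow purely from the fact that $\gamma$ is confined to a compact set, whereas the two inclusions will require the Monotonicity Principle, which is reflected in the name of the lemma.

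First I would establish confinement. The only non-compact direction of $\overline{\mathcal Y}_\mathrm{VIII}$ is the line at infinity corresponding to $\mathcal S_\infty$; and since $M_1<\sqrt{12}$ on $\mathcal X_\mathrm{VIII}$, the relation $M_1^2=3r\sin\theta$ shows that a $\mathcal Y$-orbit can approach this line only through $H_D\to\infty$ in $\mathcal X_\mathrm{VIII}$. As no orbit emanates from or escapes to infinity, the topological equivalence~\eqref{E:isomorphisms} implies that $\gamma$ stays bounded away from infinity in both time directions, so its closure is a compact subset of $\overline{\mathcal Y}_\mathrm{VIII}$. Since the system extends regularly to $\overline{\mathcal Y}_\mathrm{VIII}$ away from infinity---in particular to $\mathcal X_\mathrm I$, which is the very purpose of the blow-up---the flow is complete near $\overline\gamma$ and the boundary faces are invariant. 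The standard theory of limit sets of relatively compact orbits (cf.~\cite{WE}) then gives the four asserted properties for both $\alpha(\gamma)$ and $\omega(\gamma)$.

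For the inclusions I would apply the Monotonicity Principle, and the natural object to feed into it is the metric-anisotropy ratio $x:=(H_D^2-1)/M_1^2$, which in the $\mathcal Y$-chart is a function of $\theta$ (equivalently $s$) alone. A short computation from~\eqref{E:dsX}, using $H_D'=(H_D^2-1)(q-H_D\Sigma_+)$ together with the $M_1$-equation, collapses to the clean law $(\ln x)'=6\Sigma_+$. Its boundary limits match the statement exactly: $x\to0$ as $H_D\to1$ (on $\overline{\mathcal S}_\sharp^\mathcal Y$) and $x\to\infty$ as $M_1\to0$ (on $\overline{\mathcal B}_\mathrm{III}^\mathcal Y$). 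Were $x$ strictly increasing, the Monotonicity Principle would place $\alpha(\gamma)$ in $\{x=0\}$ and $\omega(\gamma)$ in $\{x=\infty\}$; the set $\mathcal X_\mathrm I$, being the blow-up of the locus $H_D=1$, $M_1=0$ where $x$ degenerates to the form $0/0$, has then to be adjoined to both faces, which is precisely why $\overline{\mathcal X}_\mathrm I$ appears in each inclusion.

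The hard part is that $(\ln x)'=6\Sigma_+$ is not sign-definite: $\Sigma_+$ changes sign on $\mathcal Y_\mathrm{VIII}$, and already at $\Sigma_+=0$ the sign of $\Sigma_+'$ can be driven negative by the matter term $3\Omega\,(w_2(s)-w)$. Thus $x$ itself is not admissible and the crux of the proof is to build a genuine monotone function---presumably an expression in all three variables and the parameters $(w,\beta)$ carrying $x$ as its leading factor---whose logarithmic derivative absorbs the indefinite $6\Sigma_+$ against the definitely-signed contributions $(H_D^2-1)(1-\Sigma_+^2)$ and $M_1^2/3$ present in $\Sigma_+'$, and to verify that the result keeps a strict sign throughout the interior; I expect this verification, in which the energy-condition restrictions on $(w,\beta)\in\mathbb P$ enter, to be the technical heart of the argument. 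A useful consistency check comes from the vacuum face: on $\overline{\mathcal V}_\mathrm{VIII}^\mathcal Y$ one has $\Sigma_+'>0$ by~\eqref{E:dsV}, so $\Sigma_+$ is strictly monotone there, the flow runs $T_\flat\to D$, and both endpoints lie in $\overline{\mathcal B}_\mathrm{III}^\mathcal Y$; this shows that the vacuum boundary cannot contribute points forbidden by the inclusions. Throughout, the analysis must be carried out in the regular chart $\mathcal Y_\mathrm{VIII}$, since~\eqref{E:dsX} is singular along $\mathcal L_\mathrm I$.
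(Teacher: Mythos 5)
Your first half (no orbit escapes to or emanates from infinity, hence the closure of $\gamma$ is compact, hence non-emptiness, compactness, connectedness and invariance follow from the standard theory of relatively compact orbits) is essentially the paper's argument. The gap is in the second half: you correctly identify the Monotonicity Principle as the tool, correctly compute $(\ln x)'=6\Sigma_+$ for $x:=(H_D^2-1)/M_1^2$, and correctly conclude that $x$ itself is inadmissible --- but you stop exactly where the proof begins. The paper's proof consists of exhibiting the function you only postulate: it is
\begin{equation}\notag
Z_5:=\frac{H_DM_1^{1/3}}{(H_D^2-1)^{2/3}},\qquad\text{i.e.}\qquad Z_5^{-6}=\bigl(1-H_D^{-2}\bigr)^3\,x ,
\end{equation}
your $x$ corrected by a factor depending on $H_D$ alone. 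Along the flow, $(\ln Z_5)'=-q/H_D=-\bigl(4\Sigma_+^2+(1+3w)\Omega\bigr)/(2H_D)$ (the paper misprints $4\Sigma_+$ for $4\Sigma_+^2$; the direct computation gives the square): the $H_D$-factor contributes $6(q-H_D\Sigma_+)/H_D$, which cancels your indefinite $6\Sigma_+$ and leaves a sign-definite multiple of the deceleration parameter $q>0$. Note that your guesses about the shape of the fix are off on two counts: the correction comes from the $H_D$-equation, not from absorbing $6\Sigma_+$ against terms of the $\Sigma_+$-equation, and the function involves neither $w$ nor $\beta$ --- the only parameter input is $w>-\frac{1}{3}$ (built into $\mathbb P$), which makes $(1+3w)\Omega>0$; energy conditions play no role.

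There is a second, smaller gap: the asserted inclusions also exclude the interior of the vacuum face from the limit sets, and your ``consistency check'' (vacuum orbits run from $T_\flat$ to $D$) does not prove this --- a limit set could a priori contain an entire non-trivial vacuum orbit together with its endpoints, since limit sets need not consist of fixed points. The paper handles this by defining $Z_5$ on $\mathcal X_\mathrm{VIII}\cup\mathcal V_\mathrm{VIII}$ and establishing strict monotonicity there too: on $\mathcal V_\mathrm{VIII}$ one has $Z_5'=0$ exactly on $\Sigma_+=0$, so one checks $Z_5''=0$ and $Z_5'''<0$ at those points. The Monotonicity Principle then confines both limit sets to $\overline{\mathcal B}_\mathrm{III}\cup\overline{\mathcal S}_\sharp$, and the boundary limits $Z_5\to0=\inf Z_5$ at $\mathcal B_\mathrm{III}$ and $Z_5\to\infty=\sup Z_5$ at $\mathcal S_\sharp$ split this into $\alpha(\gamma)\subseteq\overline{\mathcal S}_\sharp$ and $\omega(\gamma)\subseteq\overline{\mathcal B}_\mathrm{III}$, which under the correspondences~\eqref{E:correspondences} become the stated inclusions, with $\overline{\mathcal X}_\mathrm I$ adjoined for the reason you correctly identified.
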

\begin{proof}
First, since each orbit in $\mathcal X_\mathrm{VIII}$ lies in a compact subset, so does by~\eqref{E:coord trafo} each orbit in $\mathcal Y_\mathrm{VIII}$. The first statement of the lemma then follows from~\cite[p~99 Thm~4.9]{WE}.

Second, consider the function $Z_5:\mathcal X_\mathrm{VIII}\cup\mathcal V_\mathrm{VIII}\to\mathbb R$ given by
\begin{equation}\notag
Z_5:=\frac{H_DM_1^\frac{1}{3}}{(H_D^2-1)^\frac{2}{3}}>0\quad\text{with}\quad Z_5'=-\frac{4\Sigma_++(1+3w)\Omega}{2H_D}Z_5\le0.
\end{equation}
One can check that $Z_5''|_{\Sigma_+=\Omega=0}=0$ and $Z_5'''|_{\Sigma_+=\Omega=0}<0$ in $\mathcal X_\mathrm{VIII}$, which means that $Z_5$ is strictly monotonically decreasing along the flow in $\mathcal X_\mathrm{VIII}$. Hence the monotonicity principle \cite[p~103 Thm~4.12]{WE} implies that the limit sets lie in $\overline{\mathcal B}_\mathrm{III}\cup\overline{\mathcal S}_\sharp$. Moreover, since $Z_5\to0=\mathrm{inf}(Z_5)$ when one approaches $\mathcal B_\mathrm{III}$ and $Z_5\to\infty=\mathrm{sup}(Z_5)$ when one approaches $\mathcal S_\sharp$, the monotonicity principle implies $\alpha(\tilde\gamma)\subseteq\overline{\mathcal S}_\sharp$ and $\omega(\tilde\gamma)\subseteq\overline{\mathcal B}_\mathrm{III}$ for all orbits $\tilde\gamma$ in $\mathcal X_\mathrm{VIII}$. In the context of $\mathcal Y_\mathrm{VIII}$ this means that $\alpha(\gamma)\subseteq\overline{\mathcal S}_\sharp^\mathcal Y\cup\overline{\mathcal X}_\mathrm I$ and $\omega(\gamma)\subseteq\overline{\mathcal B}_\mathrm{III}^\mathcal Y\cup\overline{\mathcal X}_\mathrm I$; cf.~\eqref{E:correspondences}.
\end{proof}


\section{Results and discussion}\label{S:Results}

Finally all information obtained in section~\ref{S:ds analysis} can be collected to identify the $\alpha$- and $\omega$-limit sets of generic orbits $\gamma\in\mathcal Y_\mathrm{VIII}$, and hence the past and future asymptotic dynamics of generic LRS Bianchi type~VIII cosmologies, for all qualitatively different anisotropic matter cases:

\subsection{Identification of the limit sets}\label{SS:limit sets}

Figure~\ref{F:parameter space} shows that there are thirteen qualitatively different cases.\footnote{The bifurcation lines $\beta_\flat^*(w)$ and $\beta_\sharp^*(w)$ only specify if the fixed points $C_\sharp$ and $P$ are local stable nodes or foci in $\mathcal S_\sharp^\mathcal Y$ and $\mathcal B_\mathrm{III}^\mathcal Y$ respectively, which is irrelevant for identifying the limit sets.} These cases are listed in figures~\ref{F:flow13} to~\ref{F:flow1}, which show, for each case, a plot of the bifurcation diagram where the corresponding subset of $\mathbb P$ is shaded and a representative flow diagram in the set $\overline{\mathcal S}_\sharp^\mathcal Y\cup\overline{\mathcal X}_\mathrm I\cup\overline{\mathcal B}_\mathrm{III}^\mathcal Y$, on which the limit sets lie by lemma~\ref{L:monotonicity principle}. Solid (dashed) lines represent generic (non-generic) orbits in the respective boundary subset. A filled (empty) circle on a fixed point indicates that it attracts (repels) orbits in the respective orthogonal direction.\footnote{The colour coding of $T_\infty$ and $Q_\infty$ makes only sense in the context of $\mathcal X_\mathrm{VIII}$. The fixed points in $\partial\mathcal X_\mathrm I$ are not colour coded since the orthogonal directions are represented in the figures.} Heteroclinic cycles and networks are represented by thick lines. The axes with two labels emphasise the diffeomorphisms~\eqref{E:isomorphisms} and the bijection between $s$ and $\theta$; cf. section~\ref{SS:state space Y}.

By lemma~\ref{L:monotonicity principle}, the limit sets are non-empty, compact and connected invariant subsets of $\overline{\mathcal Y}_\mathrm{VIII}^\mathcal Y$. This leaves fixed points, heteroclinic cycles and heteroclinic networks as candidates for the limit sets in figures~\ref{F:flow13} to~\ref{F:flow1}.

A single fixed point is an ($\alpha$-) $\omega$-limit set of generic orbits $\gamma\in\mathcal Y_\mathrm{VIII}$ iff it (repels) attracts orbits from a three dimensional neighbourhood in $\mathcal Y_\mathrm{VIII}$; e.g. a (source) sink. These can easily be identified from the figures. In the cases of figures~\ref{F:flow9} to~\ref{F:flow2}, fixed points are the only possible generic limit sets. Furthermore, these cases exhibit just one source and sink respectively, which implies that these are the past/future attractors~\cite[p~100 Def~4.13]{WE}.

Figures~\ref{F:flow13} to~\ref{F:flow10} and~\ref{F:flow1} show a heteroclinic cycle or network each, and hence additional candidates for limit sets. In figures~\ref{F:flow11} and~\ref{F:flow10} the cycle is the past attractor since there is no other candidate for a generic $\alpha$-limit set in $\overline{\mathcal S}_\sharp^\mathcal Y\cup\overline{\mathcal X}_\mathrm{I}$, and as proven in section~\ref{SS:Analysis on X} $\alpha(\gamma)$ is non-empty. Note that since the cycle is not in $\overline{\mathcal B}_\mathrm{III}^\mathcal Y\cup\overline{\mathcal X}_\mathrm{I}$ it cannot be an $\omega$-limit set. The same line of arguments shows that in figures~\ref{F:flow13} and~\ref{F:flow12} the past attractor is a subset of the heteroclinic network.

In figure~\ref{F:flow1}, the heteroclinic network could be an $\alpha$-limit set in addition to $T_\flat$. Analogously, in figures~\ref{F:flow13} and~\ref{F:flow12}, the heteroclinic cycle $\partial\mathcal X_\mathrm I$ could be an $\omega$-limit set in addition to $D$ and $P$, respectively. The nontrivial task of directly tackling the stability properties of these structures is omitted here. They are however taken into account as further candidates for generic limit sets in these three cases. Note that figure~\ref{F:flow1} is the only one of these three cases where the energy conditions can be satisfied; cf. section~\ref{SS:matter family} and figure~\ref{F:parameter space}.

\subsection{The main results}

Finally one arrives at the main theorem and its corollaries:

\begin{theorem}\label{T:limit sets}
The $\alpha$- and $\omega$-limit sets of generic orbits $\gamma\in\mathcal Y_\mathrm{VIII}$ of the dynamical system~\eqref{E:dsX} with parameters $(w,\beta)\in\mathbb P$, is given as stated in the captions of figures~\emph{\ref{F:flow13}} to~\emph{\ref{F:flow1}}. These describe the past and future asymptotic dynamics of LRS Bianchi type~VIII cosmologies with anisotropic matter of the family defined in section~\emph{\ref{SS:matter family}}.
\end{theorem}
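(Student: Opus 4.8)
The plan is to assemble the theorem as a direct corollary of Lemma~\ref{L:monotonicity principle} together with the case-by-case local analysis already carried out in sections~\ref{SS:analysis on V} through~\ref{SS:Analysis on X}. The statement asserts nothing beyond what the figure captions record, so the proof is essentially an organisational argument: for each of the thirteen parameter regions identified in figure~\ref{F:parameter space}, I would verify that the $\alpha$- and $\omega$-limit sets of a generic orbit are exactly the invariant sets named in the corresponding caption. First I would invoke Lemma~\ref{L:monotonicity principle} to confine $\alpha(\gamma)\subseteq\overline{\mathcal S}_\sharp^\mathcal Y\cup\overline{\mathcal X}_\mathrm I$ and $\omega(\gamma)\subseteq\overline{\mathcal B}_\mathrm{III}^\mathcal Y\cup\overline{\mathcal X}_\mathrm I$, and to guarantee that each limit set is non-empty, compact and connected. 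This immediately restricts the candidates to fixed points, heteroclinic cycles and heteroclinic networks lying in these boundary subsets.

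Next I would dispose of the ``easy'' cases, namely figures~\ref{F:flow9} to~\ref{F:flow2}, where no heteroclinic structures arise. Here the local stability data computed in sections~\ref{SS:analysis on V}--\ref{SS:analysis on XI} (the signs of the eigenvalues at $T_\flat$, $Q_\flat$, $D$, $R_\flat$, $P$, and the boundary points $F$, $C_\sharp$, etc.) pin down a unique source and a unique sink in the relevant three-dimensional neighbourhood. A fixed point is a generic $\alpha$-limit set precisely when it repels from a three-dimensional neighbourhood in $\mathcal Y_\mathrm{VIII}$, and dually for $\omega$-limit sets; since there is exactly one such source and one such sink in each of these regions, and $\alpha(\gamma)$, $\omega(\gamma)$ are non-empty by Lemma~\ref{L:monotonicity principle} and the no-escape-to-infinity result of section~\ref{SS:Analysis on X}, connectedness forces each limit set to be that single fixed point, i.e. the past/future attractor.

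For the remaining cases---figures~\ref{F:flow13} to~\ref{F:flow10} and~\ref{F:flow1}, which contain heteroclinic cycles or networks on $\partial\mathcal X_\mathrm I$---the argument splits according to which boundary the structure sits in. In figures~\ref{F:flow11} and~\ref{F:flow10} the cycle lies in $\overline{\mathcal S}_\sharp^\mathcal Y\cup\overline{\mathcal X}_\mathrm I$ but not in $\overline{\mathcal B}_\mathrm{III}^\mathcal Y\cup\overline{\mathcal X}_\mathrm I$, so by the confinement from Lemma~\ref{L:monotonicity principle} it can only be an $\alpha$-limit set; since it is the sole candidate there and $\alpha(\gamma)\ne\emptyset$, it is the past attractor. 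The same dichotomy handles the networks in figures~\ref{F:flow13} and~\ref{F:flow12}. In these three regions, and in figure~\ref{F:flow1}, there remain genuinely ambiguous cases where both a fixed point and a heteroclinic structure are admissible limit sets on the same side; the theorem is phrased to list both possibilities rather than to resolve them, so the proof simply records them as candidates.

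The main obstacle is exactly this last ambiguity: directly establishing the stability of the heteroclinic cycles and networks---deciding whether a generic orbit actually converges to the structure or instead to the competing interior fixed point---requires a genuine global stability analysis of non-hyperbolic invariant sets, which is not carried out. I would therefore be explicit that the theorem deliberately lists both structures as candidate limit sets in figures~\ref{F:flow13}, \ref{F:flow12} and~\ref{F:flow1}, and note (as already flagged in section~\ref{S:ds analysis}) that the possible existence of periodic orbits encircling $P$ in $\mathcal B_\mathrm{III}$ does not affect the localisation of the limit sets, since Lemma~\ref{L:monotonicity principle} confines them to the boundary regardless. With the easy cases settled by uniqueness of the attractor and the hard cases settled by recording the candidate list, the theorem follows by reading off the captions region by region.
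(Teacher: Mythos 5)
Your proposal follows the paper's own argument (section~\ref{SS:limit sets}) essentially step by step: Lemma~\ref{L:monotonicity principle} gives confinement of $\alpha(\gamma)$ and $\omega(\gamma)$ together with non-emptiness, compactness and connectedness; in the cases of figures~\ref{F:flow9} to~\ref{F:flow2} the unique source/sink is the past/future attractor; in figures~\ref{F:flow11} and~\ref{F:flow10} the cycle is the past attractor because it is the sole $\alpha$-candidate and cannot be an $\omega$-limit set, not lying in $\overline{\mathcal B}_\mathrm{III}^\mathcal Y\cup\overline{\mathcal X}_\mathrm I$; and in figures~\ref{F:flow13}, \ref{F:flow12} and~\ref{F:flow1} both the fixed point and the heteroclinic structure are recorded as candidates, exactly as the captions do. Up to this point the proposal is correct and coincides with the paper's route.

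There is, however, one genuine misstep: your claim that the possible periodic orbits encircling $P$ ``do not affect the localisation of the limit sets, since Lemma~\ref{L:monotonicity principle} confines them to the boundary regardless.'' This gets the logic backwards. $\mathcal B_\mathrm{III}^\mathcal Y$ \emph{is} part of the boundary region $\overline{\mathcal B}_\mathrm{III}^\mathcal Y\cup\overline{\mathcal X}_\mathrm I$ in which Lemma~\ref{L:monotonicity principle} allows $\omega$-limit sets to lie, so a periodic orbit around $P$ would be a non-empty, compact, connected invariant subset of precisely that region --- i.e. a perfectly admissible $\omega$-limit candidate. The monotone function $Z_5$ is strictly decreasing only in the interior $\mathcal X_\mathrm{VIII}$ and hence excludes periodic orbits there, but says nothing about the two-dimensional flow inside $\mathcal B_\mathrm{III}$ itself, where $P$ can be a focus. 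The paper cannot and does not exclude such orbits by the monotonicity lemma; it relies on numerical evidence and explicitly concedes that their occurrence ``could change certain details in theorem~\ref{T:limit sets}, adding further $\omega$-limit candidates in the cases where $P$ is in $\mathcal B_\mathrm{III}^\mathcal Y$.'' Consequently, the captions asserting $\omega(\gamma)=P$ (figures~\ref{F:flow12}, \ref{F:flow11}, \ref{F:flow8}, \ref{F:flow7}, \ref{F:flow5}) rest on that numerically supported assumption, not on Lemma~\ref{L:monotonicity principle}; your proof should state this caveat rather than argue it away.
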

\begin{proof}
Cf. section~\ref{SS:limit sets}.
\end{proof}

The exact solutions corresponding to the fixed points are summarised in table~\ref{Table:exact solutions}. These have been given in~\cite[appendix~A]{M&S} for all fixed points but $D$. For $D$, one just needs to insert the coordinates into~\cite[Eq~99]{M&S}. The corresponding isotropic cases to these solutions can be found in~\cite[section~9.1]{WE}.

\begin{table}\begin{tabular}{ c c c c}
Fixed points & corresponding solution & type & vacuum\\ \hline
$T_\flat,T_\sharp$ & Taub Kasner & I & \checkmark \\
$Q_\flat,Q_\sharp$ & non-flat LRS Kasner & I & \checkmark \\ 
$R_\flat,R_\sharp$ & a type I anisotropic matter solution & I & \\
$F$ & flat Friedmann & I & \\
$C_\sharp$ & generalisation of Collins-Stewart & II & \\
$P$ & generalisation of Collins $(\mathrm{VI}_{-1})$\phantom{ } & III & \\
$D$ & type~III form of flat spacetime & III & \checkmark \\
& & & \\
\end{tabular}
\caption{The exact solutions corresponding to the fixed points.}\label{Table:exact solutions}
\end{table}

\begin{perfect fluid}\label{C:vac and perf fluid}
LRS Bianchi type~VIII vacuum solutions are past asymptotic to $T_\flat$ and future asymptotic to $D$.

Generic LRS Bianchi type~VIII solutions with a non-tilted perfect fluid where $p=w\rho$ and $w\in\left(-\frac{1}{3},1\right)$ are past asymptotic to $T_\flat$, future asymptotic to $P$ for $w\in\left(-\frac{1}{3},0\right)$ and to $D$ for $w\in[0,1)$.
\end{perfect fluid}

The vacuum part follows from section~\ref{SS:analysis on V}; cf. figure~\ref{F:flow on V}. For the perfect fluid cases, recall from section~\ref{SS:matter family} that they correspond to the line $\beta=0$ in $\mathbb P$. They are therefore contained in theorem~\ref{T:limit sets} as special cases.\footnote{The fact that $\beta=0$ is itself a bifurcation line, which is related to the stability of the Friedmann point, is not relevant for the asymptotic dynamics.} The only bifurcation line intersecting $\beta=0$ is $\beta_\flat(w)$ at $w=0$ which yields the two qualitatively different perfect fluid cases, see figures~\ref{F:flow6} and~\ref{F:flow5}, respectively. The statement on the future asymptotics for $w\in\left(-\frac{1}{3},0\right)$ might fill a little gap in the literature. The other results are known: The past asymptotics has been given in~\cite[table~4 and figure~4]{Wainwright&Hsu}, the future asymptotics for vacuum in~\cite[Prop~8.1]{Ringstroem} and the future asymptotics for perfect fluids with $w\in[0,1)$ in~\cite[Thm~3.1]{Horwood et al}. 

Comparison with the cases of theorem~\ref{T:limit sets} shows that the dynamics with anisotropic matter differs from the vacuum and perfect fluid cases in figures~\ref{F:flow13} to~\ref{F:flow10},~\ref{F:flow4} and perhaps~\ref{F:flow1}. The strongest implications of this represent the main results of this paper stated in the following corollaries:
\begin{corollary}[\textbf{past asymptotics}]\label{C:past asymptotics}
The past asymptotic dynamics of generic LRS Bianchi type~VIII cosmologies with anisotropic matter can differ significantly from that of the vacuum and perfect fluid cases. 
In particular, the approach to the initial singularity is oscillatory in the cases depicted in figures~\ref{F:flow13} to~\ref{F:flow10} and perhaps~\ref{F:flow1}.
\end{corollary}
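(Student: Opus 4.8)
The plan is to read the corollary off theorem~\ref{T:limit sets} together with the vacuum and perfect fluid statement, since all the analytical work locating the limit sets has already been carried out. First I would recall that, in the vacuum and perfect fluid cases, the generic past attractor is the single fixed point $T_\flat$ (a Taub--Kasner solution), so a generic orbit traced backwards in time converges to one equilibrium and the approach to the initial singularity is monotone rather than oscillatory. The assertion that the anisotropic dynamics ``can differ significantly'' is then immediate: in the cases of figures~\ref{F:flow13} to~\ref{F:flow10} the generic $\alpha$-limit set identified in section~\ref{SS:limit sets} is not a point but a heteroclinic cycle or network, hence a qualitatively different object from $T_\flat$.

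For the oscillatory claim in figures~\ref{F:flow11} and~\ref{F:flow10}, where section~\ref{SS:limit sets} already shows that the heteroclinic cycle is the past attractor, I would argue as follows. By lemma~\ref{L:monotonicity principle}, $\alpha(\gamma)$ is non-empty, compact, connected and invariant; since it equals the whole cycle, and the cycle is a union of several distinct vertex fixed points joined by heteroclinic orbits, the closure of $\gamma$ in backward time contains every vertex. Consequently $\gamma$ returns arbitrarily close to each vertex infinitely often as the singularity is approached. Because the vertices are distinct Kasner and Taub--Kasner states carrying different values of $\Sigma_+$ (and hence different shear and Kasner exponents), the state variables do not settle to a limit but keep cycling between these values; this is exactly the sense in which the approach is oscillatory.

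The remaining cases, figures~\ref{F:flow13}, \ref{F:flow12} and~\ref{F:flow1}, carry the genuine subtlety and constitute the main obstacle. For figures~\ref{F:flow13} and~\ref{F:flow12}, section~\ref{SS:limit sets} only locates the past attractor as a subset of the heteroclinic network; here I would note that no fixed point in these diagrams is a source (a three-dimensional repeller in $\mathcal Y_\mathrm{VIII}$), so by the candidate list of section~\ref{SS:limit sets} the $\alpha$-limit set cannot reduce to a single equilibrium and must contain a genuine cycle, whence the argument of the previous paragraph again yields oscillatory behaviour. Figure~\ref{F:flow1} is the delicate one: there $T_\flat$ is itself a source and thus competes with the network as a candidate $\alpha$-limit set, and the relative attractivity of the two was deliberately left open in section~\ref{SS:limit sets}; this is why the corollary claims only ``perhaps'' oscillatory in that case. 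A fully rigorous resolution would require the stability analysis of the heteroclinic structures that was explicitly omitted---computing, for instance, the product along each cycle of the ratios of contracting to expanding eigenvalues at the vertices to decide whether the cycle or the competing source captures generic orbits---and this is the step I expect to be hardest.
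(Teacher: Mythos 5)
Your proposal is correct and takes essentially the same route as the paper: the corollary is read off from the limit-set identification behind theorem~\ref{T:limit sets} together with the vacuum/perfect-fluid comparison, with the cycle as past attractor in figures~\ref{F:flow11} and~\ref{F:flow10}, the absence of any source forcing the generic $\alpha$-limit set into the heteroclinic network in figures~\ref{F:flow13} and~\ref{F:flow12}, and the unresolved competition between $T_\flat$ and the network accounting for the `perhaps' in figure~\ref{F:flow1}. Your additional justification that such a limit set must contain a full cycle (hence the persistent oscillation of $\Sigma_+$ between the Kasner-like vertices) only makes explicit what the paper leaves implicit, so it is an elaboration rather than a different argument.
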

\begin{corollary}[\textbf{future asymptotics}]\label{C:future asymptotics}
The future asymptotic dynamics of generic LRS Bianchi type~VIII cosmologies with anisotropic matter can differ significantly from that of the vacuum and perfect fluid cases.
Note in particular the cases of figures~\ref{F:flow4} and perhaps~\ref{F:flow13} and~\ref{F:flow12}.
\end{corollary}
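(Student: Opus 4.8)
The plan is to prove the corollary as a comparison statement: I would contrast the future attractor identified by Theorem~\ref{T:limit sets} in the flagged anisotropic cases against the perfect fluid and vacuum baseline, and read off that the two differ qualitatively. First I would pin down the baseline. By the vacuum and perfect fluid statement~\ref{C:vac and perf fluid} (which is the $\beta=0$ slice of Theorem~\ref{T:limit sets} together with section~\ref{SS:analysis on V}), every generic vacuum or non-tilted perfect fluid solution has a single fixed point as its $\omega$-limit set, namely $D$ for vacuum and for $w\in[0,1)$, and $P$ for $w\in\left(-\frac13,0\right)$. By table~\ref{Table:exact solutions} both are type~III solutions and the convergence is monotone, not oscillatory. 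This is the reference behaviour against which the anisotropic cases are measured.

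Next I would exhibit a region in which the future attractor is genuinely different, which is the content of figure~\ref{F:flow4}. This figure corresponds to $(w,\beta)\in\mathbb P_\supset$, and by the analysis in section~\ref{SS:analysis on B} the fixed point $R_\flat$ then exists in $\mathcal B_\mathrm{III}$ and is a sink, whereas $D$ is only a saddle and $P$ does not exist there. Lemma~\ref{L:monotonicity principle} confines $\omega(\gamma)$ to $\overline{\mathcal B}_\mathrm{III}^\mathcal Y\cup\overline{\mathcal X}_\mathrm I$, and figure~\ref{F:flow4} shows $R_\flat$ to be the unique sink in that set; hence the future attractor of generic orbits is $R_\flat$. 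Since, by table~\ref{Table:exact solutions}, $R_\flat$ is a type~I anisotropic matter solution --- neither vacuum nor a perfect fluid --- this already establishes the first sentence of the corollary: the future asymptotic state can be of a different Bianchi type and can fail to be a perfect fluid solution.

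For the conditional ``perhaps'' cases, figures~\ref{F:flow13} and~\ref{F:flow12}, I would note that there the heteroclinic cycle $\partial\mathcal X_\mathrm I$ is an additional candidate $\omega$-limit set beside $D$ and $P$ (as observed in section~\ref{SS:limit sets}); should it attract a three-dimensional set of interior orbits, the corresponding future dynamics would be recurrent rather than convergent, an even stronger departure from the perfect fluid case. Because Lemma~\ref{L:monotonicity principle} only localises the limit set and does not exclude the cycle, these two cases can only enter the statement conditionally.

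The hard part is exactly this last point: deciding whether the heteroclinic cycle $\partial\mathcal X_\mathrm I$ is an actual $\omega$-limit set of generic interior orbits. Settling it goes beyond the linear fixed-point data used elsewhere in section~\ref{S:ds analysis}; one would need to track the product of the relevant eigenvalue ratios around the cycle, or construct a transversal return map, in the spirit of the stability theory for heteroclinic structures in Bianchi cosmology. This is the step deliberately left open in section~\ref{SS:limit sets}, and it is the reason the two cases are hedged with ``perhaps''. Crucially, the unconditional assertion of the corollary rests on figure~\ref{F:flow4} alone and does not require resolving it.
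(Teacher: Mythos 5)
Your proposal is correct and takes essentially the same route as the paper: the corollary is obtained exactly as in section~\ref{SS:limit sets}, by comparing the limit sets of theorem~\ref{T:limit sets} against the vacuum/perfect-fluid baseline of the $\beta=0$ line, with figure~\ref{F:flow4} (where $R_\flat$ is the unique generic $\omega$-limit candidate, an anisotropic type~I state rather than $D$ or $P$) carrying the unconditional claim, and figures~\ref{F:flow13} and~\ref{F:flow12} entering only conditionally because the attractivity of the heteroclinic cycle $\partial\mathcal X_\mathrm I$ is deliberately left unresolved. A minor remark: for figure~\ref{F:flow12} you name $P$ as the fixed-point candidate, which agrees with the text of section~\ref{SS:limit sets} and with $P$ being the local sink there, even though that figure's caption states $D$.
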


For completeness it is necessary to elaborate on the remark at the end of section~\ref{SS:analysis on B} concerning the possible occurrence of periodic orbits around $P$ in $\mathcal B_\mathrm{III}^\mathcal Y$: Such orbits are not observed numerically. However, their occurrence could change certain details in theorem~\ref{T:limit sets}, adding further $\omega$-limit candidates in the cases where $P$ is in $\mathcal B_\mathrm{III}^\mathcal Y$. On the other hand, corollary~\ref{C:past asymptotics} would be completely untouched and corollary~\ref{C:future asymptotics} even strengthened by the occurrence of such periodic orbits.

\subsection{Physical interpretation of the results}\label{SS:discussion}

An interesting observation is that there is a neighbourhood of the line $\beta=0$ in $\mathbb P$ for which the corresponding asymptotic dynamics is identical to that of the perfect fluid cases. Hence the perfect fluid solutions are robust under small perturbations of a vanishing anisotropy parameter. For the past asymptotics to differ from that, $\beta$ even needs to be so high that the dominant energy condition is only marginally satisfied.

The specialisation of the results to the cases for which the energy conditions are satisfied is obtained by restricting $\beta$ to $\bigl[\mathrm{max}\bigl(-2,-\frac{1+w}{1-w}\bigr),1\bigr]$; cf. section~\ref{SS:matter family}. This corresponds to the shaded region in figure~\ref{F:parameter space}, and thus rules out the cases of figures~\ref{F:flow13} and~\ref{F:flow12}. All other cases contain values $(w,\beta)$ for which the energy conditions are satisfied.

However, it should be pointed out that the cases represented by figures~\ref{F:flow11} and~\ref{F:flow10} only satisfy the energy conditions for $\beta=1$, and the case of figure~\ref{F:flow1} only for $w\geq\frac{1}{3}$ and $\beta=-2$, i.e. for $(w,\beta)$ in one-dimensional subsets of $\mathbb P$ for which the dominant energy condition is only marginally satisfied. Consequently, in these cases the energy flow is necessarily lightlike close to the singularity. A special example is given by collisionless (Vlasov) matter with massless particles, which as shown in~\cite[section~12.1]{M&S} falls into the class of matter models with $(w,\beta)=\left(\frac{1}{3},1\right)$.

In any case, the statements of corollaries~\ref{C:past asymptotics} and~\ref{C:future asymptotics} remain true under the restriction to matter satisfying the energy conditions.\\

This concludes the main results on the anisotropic matter analysis. Section~\ref{S:Vlasov} is concerned with an extension of the present formalism to treat LRS type~VIII dynamics with Vlasov matter with massive particles, which does not a priori fit into the anisotropic matter family considered so far.

\begin{figure}[H]

\caption{$\alpha(\gamma)=T_\flat$, possibly $\mathrm{network}(T_\sharp,Q_\sharp,F)$. $\omega(\gamma)=D$.}\label{F:flow1}
\end{figure}


\section{Extension of the formalism to treat Vlasov matter dynamics with massive particles}\label{S:Vlasov}

It was stated in section~\ref{SS:discussion} that Vlasov matter with massless particles falls into the class of matter models described in section~\ref{SS:matter family} with parameters $(w,\beta)=(\frac{1}{3},1)$. This is shown in~\cite[section~12.1]{M&S}. However, as stated there, for Vlasov matter with massive particles the relation $p=w\rho$ is non-linear. In~\cite{M&S;Vlasov}, \mbox{Calogero} and \mbox{Heinzle} extended their formalism to treat Vlasov matter dynamics with massive particles in the case of LRS type~IX. Since types~VIII and~IX are analogous in this formulation, and in particular share the same evolution equations, this extension of the formalism can be adopted to LRS type~VIII without difficulties:

Following~\cite[section~3]{M&S;Vlasov}, the dynamical system representing the LRS type~VIII Einstein-Vlasov system is given by the system~\eqref{E:dsX} and the additional equation
\begin{equation}\label{E:l'}
l'=2H_Dl(1-l).
\end{equation}
The additional variable $l:=\frac{(\det g)^{1/3}}{1+(\det g)^{1/3}}\in(0,1)$ corresponds to a length scale of the spatial metric. The rescaled principal pressures are functions of $l$ and $s$. Their full expressions are given in~\cite[Eq~13]{M&S;Vlasov} from which it follows that
\begin{equation}\label{E:(w,b)}
(w,\beta)|_{l=0}=\left(\textstyle{\frac{1}{3}},1\right) \qquad\text{and}\qquad (w,\beta)|_{l=1}=(0,0).
\end{equation}
The LRS type~VIII state space for Vlasov matter dynamics is given by $\mathcal X_\mathrm{VIII}\times(0,1)$.

The dynamical system can be analysed as follows: By~\eqref{E:l'}, $l$ is strictly monotonically increasing along orbits in $\overline{\mathcal X}_\mathrm{VIII}\times(0,1)$. The monotonicity principle thus implies that $\alpha(\gamma)\subseteq\overline{\mathcal X}_\mathrm{VIII}\times\{0\}$ and $\omega(\gamma)\subseteq\overline{\mathcal X}_\mathrm{VIII}\times\{1\}$.\footnote{One can use the same arguments as in section~\ref{SS:Analysis on X} and appendix~\ref{S:appendix} for each $l=\mathrm{const}$ hypersurface to prove that all orbits can be trapped in a compact subset of $\mathbb R^4$.} Hence, the search for the limit sets can be restricted to these boundary subsets:

From~\eqref{E:(w,b)}, the flow in $\overline{\mathcal X}_\mathrm{VIII}\times\{0\}$ is equivalent to the flow for anisotropic matter with parameters $(\frac{1}{3},1)$; cf. figure~\ref{F:flow10}. Hence the $\mathrm{cycle}(T_\flat,T_\sharp,Q_\sharp,Q_\flat)$ is the past attractor of the flow in $\mathcal X_\mathrm{VIII}\times(0,1)$, since this cycle is the only candidate for an $\alpha$-limit set for generic orbits. Also from~\eqref{E:(w,b)}, the flow in $\overline{\mathcal X}_\mathrm{VIII}\times\{1\}$ is equivalent to the flow for anisotropic matter with parameters $(0,0)$, i.e. to that for dust; cf. figure~\ref{F:flow6}. Hence $D$ is the future attractor of the flow in $\mathcal X_\mathrm{VIII}\times(0,1)$, since this fixed point is the only candidate for an $\omega$-limit set for generic orbits.

$l=0$ corresponds to Vlasov dynamics with massless particles; cf.~\cite[section~3]{M&S;Vlasov}. Hence the result states that Vlasov matter with massive particles behaves like Vlasov matter with massless particles asymptotically to the past, and like dust asymptotically to the future.


\begin{appendix}

\section{Fixed points at infinity}\label{S:appendix}

To find the fixed points of~\eqref{E:dsX}, \eqref{E:dsV} and \eqref{E:dsB} at infinity, a method presented in \cite[3.10]{Perko} is adopted in a slightly modified way:

\subsection{In two dimensions}\label{SS:appendix2dim}

Consider a two dimensional system
\begin{equation*}
\begin{bmatrix} H_D \\ \Sigma_+ \end{bmatrix}' = \begin{bmatrix} P(H_D,\Sigma_+) \\ Q(H_D,\Sigma_+) \end{bmatrix}
\end{equation*}
where $P$ and $Q$ are polynomials of degree $n$ and $m$ in $H_D$ respectively, satisfying $n\leq m+1$. One can write this system as
\begin{equation}\label{E:ds2}
Q(H_D,\Sigma_+)\mathrm dH_D-P(H_D,\Sigma_+)\mathrm d\Sigma_+=0 ,
\end{equation}
where however the information about the direction of the flow is lost. Next the $H_D$~coordinate is formally compactified by projecting the state space onto the `Poincar\'e cylinder' as illustrated in figure~\ref{F:Poincare cylinder}. The corresponding coordinate transformation $(H_D,\Sigma_+)\rightarrow(X,\Sigma_+,Z):X^2+Z^2=1$ is given by $H_D=\frac{X}{Z}$. Applying this to \eqref{E:ds2} and multiplying by $Z^{m+2}$ yields
\begin{equation}\label{E:ds2XYZ}
Z^{m+1}Q(X/Z,\Sigma_+)\mathrm dX-Z^{m+2}P(X/Z,\Sigma_+)\mathrm d\Sigma_+-XZ^mQ(X/Z,\Sigma_+)\mathrm dZ=0,
\end{equation}
which defines a flow on the `Poincar\'e cylinder'; cf. \cite[p~267]{Perko}. Points at infinity in the original state space correspond to points on the `equator' $X=1,Z=0$ on the `Poincar\'e cylinder'. Furthermore, evaluating~\eqref{E:ds2XYZ} with $X=1,Z=0$ gives the flow on the `equator', which corresponds to the flow at infinity,
\begin{equation}\label{E:ds2equator}
\bigl(Z^mQ(X/Z,\Sigma_+)\bigr)\big|_{X=1, Z=0}\mathrm dZ=0.
\end{equation}
Note that the first two terms of~\eqref{E:ds2XYZ} vanish since they are at least proportional to $Z$. This is not true however for the third term of~\eqref{E:ds2XYZ}. From~\eqref{E:ds2equator}, for $(Z^mQ)|_{X=1,Z=0}\neq0$ it follows that $\mathrm dZ=0$, which corresponds to trajectories through a regular point on the `equator', where the sign of $(Z^mQ)|_{X=1,Z=0}$ determines the flow direction. Fixed points on the `equator' correspond to solutions of $(Z^mQ)|_{X=1,Z=0}=0$; cf. \cite[p~268 Thm~1]{Perko}. Solving this equation for $Q$ in the context of~\eqref{E:dsV} and~\eqref{E:dsB} yields $\Sigma_+=\pm1$ for the fixed points at infinity in $\mathcal V_\mathrm{VIII}$ and $\mathcal B_\mathrm{III}$ respectively, i.e. $T_\infty:=[\infty,-1]^\mathrm T$ and $Q_\infty:=[\infty,1]^\mathrm T$.
\begin{figure}\begin{tikzpicture}
\draw[->](-2.5,0)--(2.5,0);		\node[below right]at(2.5,0){$\scriptstyle X$};			
\draw[->](0,-.6)--(0,1.8);		\node[above left]at(0,1.8){$\scriptstyle Z$};				
\draw[->](-2.5,1.5)--(2.5,1.5);	\node[below right]at(2.5,1.5){$\scriptstyle{H_D}$}; 	
\draw(1.5,0)arc(0:180:1.5cm);					%
\draw[dashed](1.5,0)arc(0:-20:1.5cm);		
\draw[dashed](-1.5,0) arc (180:200:1.5cm);	%
\node[below right]at(1.5,0){$\scriptstyle 1$};
\draw(0,0)--(2,1.5);									
\draw[fill] (2,1.5) circle (0.025);	\node[above]at(2,1.5){$\scriptstyle{(H_D,\Sigma_+)}$};	
\draw[fill] (37:1.5) circle (0.025);\node[right]at(35:1.5){$\scriptstyle{(X,\Sigma_+,Z)}$};	
\draw(37:1.5cm)--++(-1.2,0);\draw(1.2,-.05)--(1.2,.05);
\end{tikzpicture}
\caption{Projection onto the `Poincar\'e cylinder'.}\label{F:Poincare cylinder}\end{figure}
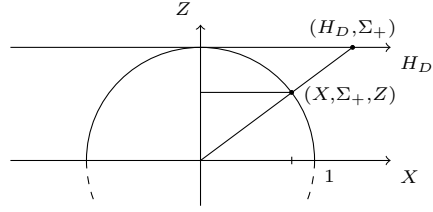

\subsection{In three dimensions}\label{SS:appendix3dim}

The generalisation to three or more dimensions is straightforward; cf. \cite[p~277~ff]{Perko}. Consider a three dimensional system
\begin{equation*}
\begin{bmatrix} H_D \\ \Sigma_+ \\ M_1 \end{bmatrix}' = \begin{bmatrix} P(H_D,\Sigma_+,M_1) \\ Q(H_D,\Sigma_+,M_1) \\ R(H_D,\Sigma_+,M_1) \end{bmatrix}
\end{equation*}
where $P$, $Q$ and $R$ are polynomials of degree $l$, $n$ and $m$ in $H_D$ respectively, satisfying $n+1\geq l\leq m+1$. One can write this system as
\begin{equation}\begin{split}\label{E:ds3}
Q(H_D,\Sigma_+,M_1)\mathrm dH_D-P(H_D,\Sigma_+,M_1)\mathrm d\Sigma_+ &= 0 \\
R(H_D,\Sigma_+,M_1)\mathrm dH_D-P(H_D,\Sigma_+,M_1)\mathrm dM_1 &= 0,
\end{split}\end{equation}
where however the information about the direction of the flow is lost. Again the $H_D$~coordinate is formally compactified by means of a projection on the `Poincar\'e cylinder', i.e. by a coordinate transformation $(H_D,\Sigma_+,M_1)\rightarrow(X,\Sigma_+,M_1,Z):X^2+Z^2=1$ given by $H_D=\frac{X}{Z}$. Applying this to~\eqref{E:ds3}, multiplying by $Z^{m+2}$ and $Z^{n+2}$ respectively and evaluating the resulting expressions at the `equator' $X=1,Z=0$ yields
\begin{align}
\bigl(Z^mQ(X/Z,\Sigma_+,M_1)\bigr)\big|_{X=1,Z=0}\mathrm dZ &= 0\notag\\
\bigl(Z^nR(X/Z,\Sigma_+,M_1)\bigr)\big|_{X=1,Z=0}\mathrm dZ &= 0.\notag
\end{align}
Trajectories through a regular point on the `equator' corresponds to $\mathrm dZ=0$, where the flow direction is determined by the signs of $\bigl(Z^mQ(X/Z,\Sigma_+,M_1)\bigr)\big|_{X=1,Z=0}$ and $\bigl(Z^nR(X/Z,\Sigma_+,M_1)\bigr)\big|_{X=1,Z=0}$. Fixed points on the `equator' correspond to solutions of the system of equations $\{(Z^mQ)|_{X=1,Z=0}=0,(Z^nR)|_{X=1,Z=0}=0\}$. Solving this for $Q$ and $R$ in the context of~\eqref{E:dsX} yields $\Sigma_+=\pm1,M_1=0$ for the fixed points at infinity in $\mathcal X_\mathrm{VIII}$, i.e. $T_\infty:=[\infty,-1,0]^\mathrm T$ and $Q_\infty:=[\infty,1,0]^\mathrm T$.

\end{appendix}

\section*{Acknowledgements}

First and foremost, I am grateful to \mbox{J~Mark~Heinzle} who suggested this topic to me and provided guidance during the whole project. I am also thankful for the discussions I had with \mbox{S~Calogero} and \mbox{C~Uggla} during the workshop `Dynamics of General Relativity' at the Erwin Schr\"odinger Institute for Mathematical Physics in Vienna in summer~2011.


\end{document}